\def\d{\mathrm{d}}
\def\laweq{\buildrel \d \over =}
\def\pcn{\buildrel \mathrm p \over \rightarrow}
\newcommand{\VaR}{\mathrm{VaR}}
\newcommand{\U}{\mathrm{U}}
\newcommand{\ES}{\mathrm{ES}}
\newcommand{\E}{\mathbb{E}}
\newcommand{\R}{\mathbb{R}}
\newcommand{\N}{\mathbb{N}}
\newcommand{\p}{\mathbb{P}}
\newcommand{\id}{\mathds{1}}
\newcommand{\X}{\mathcal X}
\renewcommand{\H}{\mathbb H}
\newcommand{\esssup}{\mathrm{ess\mbox{-}sup}}
\newcommand{\essinf}{\mathrm{ess\mbox{-}inf}}
\renewcommand{\ge}{\geqslant}
\renewcommand{\le}{\leqslant}
\renewcommand{\leq}{\leqslant}
\renewcommand{\epsilon}{\varepsilon}
\theoremstyle{plain}
\newtheorem{theorem}{Theorem}
\newtheorem{lemma}{Lemma}
\newtheorem{proposition}{Proposition}
\theoremstyle{definition}
\newtheorem{definition}{Definition}
\newtheorem{example}{Example}
\theoremstyle{remark}
\newtheorem{remark}{Remark}
\theoremstyle{definition}
\renewcommand{\cite}{\citet}
\begin{document}

\title{Risk Concentration and the  Mean-Expected Shortfall Criterion}

\author{
Xia Han\thanks{Department of Statistics and Actuarial Science, University of Waterloo, Canada.  E-mail: \texttt{x235han@uwaterloo.ca}}
\and  Bin Wang\thanks{Academy of Mathematics and Systems Science, Chinese Academy of Sciences, Beijing, China. E-mail: \texttt{wangbin@amss.ac.cn}} \and Ruodu Wang\thanks{Department of Statistics and Actuarial Science, University of Waterloo, Canada. E-mail: \texttt{wang@uwaterloo.ca}}
\and  Qinyu Wu\thanks{Department of Statistics and Finance, University of Science and Technology of China, China. E-mail: \texttt{wu051555@mail.ustc.edu.cn}}}

 \date{\today}

\maketitle

\begin{abstract} Expected Shortfall (ES, also known as CVaR) is the most important coherent risk measure in finance, insurance, risk management, and engineering. Recently, \cite{WZ21}   put forward four   economic axioms for portfolio risk assessment  and provide the first economic axiomatic foundation for the family of
$\ES$.   In particular, the axiom of no reward for concentration (NRC) is arguably quite strong, which imposes an additive form of the risk measure on portfolios with a certain dependence structure.
We move away from the axiom of NRC by introducing the notion of {\emph {concentration aversion}}, which does not impose any specific form of the risk measure.   It turns out that risk measures with concentration aversion  are   functions of  ES and the expectation. Together with the other three standard axioms of monotonicity, translation invariance and lower semicontinuity, concentration aversion uniquely characterizes the family of ES.
In addition,  we establish  an axiomatic foundation for  the problem of mean-ES  portfolio selection
and  new explicit formulas for convex and consistent risk measures.
Finally, we provide an economic justification for   concentration aversion via  a few axioms on the attitude of a regulator towards dependence structures. 

\medskip
\noindent
\textsc{Keywords:}  Risk measures,  dependence,   tail event, concentration aversion, portfolio selection.
\end{abstract}

\section{Introduction}
The quantification of market risk for pricing,  portfolio selection, and  risk management purposes has long been a point of interest to researchers and practitioners in finance.  Since the early 1990s, Value-at-Risk (VaR)  has been the leading tool for measuring market risk because of its  conceptual simplicity and easy evaluation.  It is well known that VaR has been criticized because of its fundamental deficiencies; for instance, it does not account for ``tail risk"  and  its lack of subadditivity or convexity; see e.g.,  \cite{DEGKMRS01}. These limitations have prompted the implementation of an alternative measure of risk, the  Expected Shortfall (ES), also known as CVaR, TVaR and AVaR in various contexts.

 As the dominating class of risk measures in financial practice, ES has many nice theoretical  properties. In particular,  ES satisfies the four axioms of coherence \citep{ADEH99}, and it is also additive for comonotonic risks \citep{K01}, and thus it is
  a convex Choquet integral  \citep{Y87, S89}. In addition
to these theoretic properties, ES  admits a nice representation as the minimum of
expected losses \citep{RU02}, which allows for convenience in convex optimization.
In the recent Fundamental Review of the Trading Book \citep{B16, B19}, the Basel Committee on Banking Supervision proposed a shift from the $99 \%$ VaR to the
$97.5 \%$ ES as the standard risk measure for internal models in market risk assessment. All the above reasons make ES arguably the most important risk measure in banking practice and insurance regulation.

The study of
axiomatic characterization of risk measures  provides guidelines for choosing among various choices of risk measures.
Several sets of axioms
have been established to characterize VaR, including those of \cite{C091}, \cite{KP16}, \cite{HP18}, and \cite{LW21}.  Fewer scholars analyze the axiomatic foundation for ES. In some papers,  ES is identified  based on its joint property with the corresponding VaR; in particular,
ES is  the smallest law-invariant coherent risk measure dominating  VaR \citep{D02}, the only coherent distortion risk measure co-elicitable with  VaR \citep{WW20b}, and the only coherent Bayes risk measure with VaR being its Bayes estimator \citep{EMWW21}.


Different from the above literature relying on VaR to identify ES,   \cite{WZ21} proposed four   axioms, monotonicity, law invariance, prudence and no reward for concentration (NRC), in the context of portfolio risk assessment, which jointly characterize the family of ES.
The key axiom [NRC] means that a \emph{concentrated} portfolio, whose components incur large losses simultaneously in a   stress event $A$ of regulatory interest,  does not receive any capital reduction. This axiom  reflects  two important common features in portfolio risk assessment. The first   is that  regulators are concerned with tail events, which are rare events (i.e., have small probabilities) in which risky positions incur large losses, and  the second    concerns  diversification and risk concentration.
Mathematically, [NRC] is  quite a  strong  property as  it gives the additive form of the risk measure on  concentrated portfolios.  Hence, [NRC] does not apply in contexts where values of the underlying risk measures  are not meant to be additive, such as risk rating or ranking decisions; nevertheless, ES can be used for rating or ranking credit risks, as in, e.g., \cite{GKWW20}.

The main purpose of this paper is the  study  of an alternative, more natural, property which does not impose any specific functional form and can replace [NRC].
This alternative property  will be called   \emph{concentration aversion}
(CA), whose desirability in regulation can be justified  by the   arguments of \cite{WZ21} who extensively discussed issues related to risk concentration and diversification benefit.  Although reflecting similar economic considerations, none of [CA] and [NRC] implies the other. As [CA] is free of any particular functional form,   it is invariant under any strictly increasing transforms on the risk measure, and this invariance is not shared by [NRC].
  In Section \ref{sec:2},  some preliminaries about risk measures are collected, and  the key property [CA] is formulated.  We show that together with law invariance,  [CA] is equivalent to a more mathematically tractable property [$p$-CA] in  Proposition \ref{prop:1}.

As the
 first main result of this paper, Theorem \ref{th-p-CA} in  Section \ref{sec:3} says that  the risk measures satisfying   [$p$-CA] are precisely  functions  of  ES and expectation.  The proof of Theorem \ref{th-p-CA} is quite different from techniques used in \cite{WZ21}, and it requires some novel mathematical tools including a recent advanced result from \cite{WW20}.
 We proceed to illustrate   in Theorem \ref{th:port}  that  [CA]  characterizes the mean-ES criteria in portfolio selection,  thus   providing an axiomatic foundation for such optimization problems.
 The mean-risk  portfolio selection problem has a long history since \cite{M52}; see also \cite{BS01}, \cite{RU02} and the more recent \cite{HK21}.

In Section \ref{sec:4},  we concentrate on  monetary risk measures, the most popular type of risk measures; for a comprehensive treatment, see \cite{FS16}.
It turns out that monetary risk measures satisfying [CA] admit a simple representation as a special type of mean-deviation risk measures (Theorem \ref{th-p}), where the deviation is measured by a transformed difference between ES and the mean.
Quite surprisingly, if we further impose lower semi-continuity, then such a monetary risk measure has to be an ES (Theorem \ref{th-ES}).
Compared to the main result of  \cite{WZ21}, our new characterization enhances  the axiomatic theory for $\ES$ as no particular additive form needs to be assumed ex ante.
Moreover, we obtain characterizations for coherent, convex, or consistent risk measures \citep{MW20} satisfying [CA], giving rise to many new explicit examples of convex and nonconvex consistent risk measures.

In the main part of the paper, the domain of   risk measures of interest is chosen as the set of bounded random variables.  Generalizations and technical remarks related  extending the above results to larger spaces of random variables are discussed in Section \ref{sec:5}. In particular,  all our main results can be readily extended to $L^q$ spaces for $q\ge1$  under a continuity assumption.

Finally, in Section \ref{sec:6-r1}, we provide an endogenous economic reasoning for   concentration aversion via  a few axioms on the attitude of a regulator towards bivariate dependence structures. 
We show in Theorems \ref{th:r1-1} and \ref{th:r1-2} that, the four natural requirements of non-diversifiability, dependence monotonicity, convexity and maximality jointly characterize the dependence structures   modeling risk concentration in this paper. This result provides a theoretical  support to [CA], as well as   [NRC] of \cite{WZ21}, in the  context of regulatory risk measures.
To the best of our knowledge, there is no similar study in the literature on axiomatizing sets of dangerous dependence structures.

 \section{Risk concentration and concentration aversion}\label{sec:2}

Throughout this paper, we work with an atomless probability space $(\Omega,\mathcal F,\p)$. All equalities and inequalities of functionals on $(\Omega,\mathcal F,\p)$ are under $\p$ almost surely ($\p$-a.s.) sense.  A risk measure  $\rho$ is a mapping  from $\X$ to $(-\infty,\infty]$, where $\X$ is  a convex cone of random variables representing losses faced by financial institutions.
For $q\in (0,\infty)$, denote by $L^q=L^q(\Omega,\mathcal F,\p)$ the set of all random variables $X$ with $\E[|X|^q]<\infty$ where $\E$ is the expectation under $\p$. Furthermore, $L^\infty=L^\infty(\Omega,\mathcal F,\p)$ is the space of all essentially bounded random variables, and $L^0=L^0(\Omega,\mathcal F,\p)$ is denoted by the space of all random variables. Positive values of random variables in $\mathcal{X}$ represent one-period losses. We write $X \stackrel{\mathrm{d}}{=} Y$ if two random variables $X$ and $Y$ have the same law. We first collect the key  concepts of tail events and risk concentration as in \cite{WZ21}.

\begin{definition}[Tail events and risk concentration]\label{p-tail-event-a}
Let $X$ be a random variable and $p\in(0,1)$.
\begin{enumerate}[(i)]
\item
A \emph{tail event} of  $X$ is an event $A\in \mathcal F$  with $0<\p(A)<1$  such that   $X(\omega)\ge X(\omega')$ holds for a.s.~all  $\omega\in A$ and $\omega'\in A^c$,
where $A^c$ stands for the complement of $A$.
\item A \emph{$p$-tail event} of $X$ is a tail event of $X$ with probability $1-p$.
\item
A random vector $(X_1,\dots,X_n)$ is \emph{$p$-concentrated} if its components share a common $p$-tail event.
\item A random vector $(X_1,\dots,X_n)$ is \emph{comonotonic}
if there exists a random variable $Z$ and increasing functions $f_1,\dots,f_n$ on $\R$ such that $X_i=f_i(Z)$ a.s.~for every $i=1,\dots,n$.
\end{enumerate}
\end{definition}

The terminology that a $p$-tail event has probability $1-p$ stems from the regulatory language
where, for instance, a tail event with probability $1 \%$ corresponds to the calculation of a $99\%$ VaR.
A random vector  $(X_1,\dots, X_n)$  is $p$-concentrated  for all $p\in(0,1)$ if and only if it is comonotonic; see Theorem 4 of \cite{WZ21}.  Hence, $p$-concentration can be seen as a weaker notion of positive dependence than comonotonicity, which is  a popular notion in the axiomatic characterization of risk functionals and preferences; see  e.g.,
 \cite{Y87} and \cite{S89}.
 For more details  and a real-data example on  $p$-concentration,  see \cite{WZ21}.


Next, we define the two important risk measures in banking and insurance practice. The $\VaR$ at level $p\in (0,1)$ is the functional $\VaR_p:L^0 \to \mathbb{R}$ defined by
$$
 \VaR_p(X)= \inf\{x\in \R: \p(X\le x)\ge  p\},
$$
which precisely is the left $p$-quantile of $X$, and the $\ES$ at  level $p\in(0,1)$ is the functional $\ES_p:L^1 \to \mathbb{R}$ defined by
$$
  \ES_p(X)=\frac{1}{1-p}\int_p^1 \VaR_s(X)\d s.
$$
In this paper, terms such as increasing or decreasing functions are in the non-strict sense.
A few axioms and  properties of a risk measure $\rho$ on $\mathcal X$ are collected below, where all random variables are tacitly assumed to be in the space $\X$.

\begin{enumerate}
\item[{[M]}] Monotonicity: $\rho(X)\le \rho(Y)$ whenever $X\le Y$ (pointwise).
\item[{[TI]}] Translation invariance: $\rho(X+c)=\rho(X)+c$ for all $c\in\R$.
\item[{[LI]}] Law invariance:  $\rho(X)=\rho(Y)$ whenever $X\laweq Y$.
\item[{[P]}] Lower semicontinuity: $\liminf_{n\to\infty} \rho(X_n) \ge \rho(X)$ if $X_n\to X$ (pointwise).
\item[ {[NRC]}] No reward for concentration:
There exists an event $A \in \mathcal F$ such that
$ \rho(X+Y)=\rho(X)+\rho(Y) $ holds for all risks  $X $ and $Y$ sharing the tail event $A$.
\item[{[$p$-TA]}] $p$-tail additivity with $p\in (0,1)$: $\rho(X+Y)=\rho(X)+\rho(Y)$
  for all $X$ and $Y$ sharing a $p$-tail event.\footnote{The property [$p$-TA] is called \emph{$p$-additivity} by \cite{WZ21}.}
\end{enumerate}

\cite{WZ21} proposed [M], [LI], [P] and [NRC] as  four axioms, and showed that they together characterize the class of ES up to scaling; see their Theorem 1 and Endnote 14.\footnote{Lower semicontinuity is called \emph{prudence} by \cite{WZ21} and hence the abbreviation [P].}
 Axioms [M], [TI], and [LI]  are standard in the literature of monetary risk measures. Axiom [P] is motivated by the statistical consideration of robustness (\cite{H71}).  It reflects the idea that if the loss $X$ is statistically modelled using a truthful approximation (e.g., via a consistent distribution estimator), then the approximated risk model should not underreport the capital requirement as the approximation error tends to zero. Therefore,  Axiom [P] is also a natural requirement for a reasonable risk measure  used in practice, as argued by \cite{WZ21}.
The inequality in [M] and convergence in [P] are formulated in  a  pointwise sense,   making these axioms weaker and the corresponding characterization results stronger.
Nevertheless, as discussed in Remark 1 of  \cite{WZ21}, one can replace ``$X\le Y$ (pointwise)" in [M] by ``$X\le Y$ $\p$-a.s." or ``$X\preceq_{\rm st} Y$\footnote{The partial order $X\preceq_{\rm st} Y$ means  $\p(X\le t)\ge \p(Y\le t)$ for all $t\in\R$.}", and replace pointwise convergence  in [P]  by   in probability,  in distribution, or a.s.~convergence.\footnote{Continuity with respect to convergence in distribution is an equivalent formulation    of robustness as shown by \cite{H71}.
A related property in the literature of convex risk measures is the Fatou property,  meaning  that $\liminf _{n \rightarrow \infty} \rho\left(X_{n}\right) \geqslant \rho(X)$ whenever $\left\{X_{n}\right\}_{n \in \mathbb{N}}$ is a bounded sequence in $\mathcal{X}$ converging to $X$ (pointwise). Clearly, [P] is stronger than the Fatou property, and we will see in Theorem \ref{th-ES} that [P] is used to the characterization of ES. The Fatou property is essential to a dual representation of convex risk measures, and we do not assume convexity in most results in our paper. }
All results in this paper would still hold with the above modified versions.

 As discussed above,  [{\rm NRC}] intuitively means that a concentrated portfolio, whose components incur large losses simultaneously in the stress event $A$, does not receive any diversification benefit.  For a law-invariant risk measure, the property [NRC] is equivalent to [$p$-TA] for some $p\in(0,1)$; see Proposition 4 of \cite{WZ21}. Thus, it suffices to work with [$p$-TA] when analyzing  the property [NRC] of law-invariant risk measures.   As [NRC]  imposes an additive form for the risk measure evaluated on  concentrated portfolios,  it may be seen as a quite strong  property mathematically, and it cannot be used in a context,  such as rating or ranking risks, where values of risk measures or preference functionals are not interpreted as additive units.
Therefore,  finding an alternative property, without the additive form, that may replace [NRC] to characterize ES (and preferences induced by ES) becomes a  natural  problem.\footnote{We thank Martin Herdegen for raising this question during a seminar at the University of Warwick in October 2020.}

To address this problem, we propose the property \emph{concentration aversion} (CA), in a way similar to [NRC] but without imposing additivity.  Instead of assuming that the risk measure is additive for concentrated portfolios,  the new property of [CA] requires that the risk measure (or decision maker) assigns a larger or equal value
for concentrated portfolios, compared to a portfolio that is not concentrated and otherwise identical.



\begin{definition}\label{def:CA} A risk measure $\rho$ satisfies \emph{concentration aversion} if
 there exists an event $A \in \mathcal{F}$ with $\p(A)\in (0,1)$ such that $\rho(X+Y)\le \rho(X'+Y')$ if  $X\laweq X'$, $Y\laweq Y'$, and $X'$ and $Y'$ share the tail event $A$. This property is denoted by [CA].
\end{definition}

 The  event $A$ in [CA] should be interpreted as a stress event of interest to the regulator.  We will show in Proposition \ref{prop:1} that the specification of $A$ does not matter in characterization results on law-invariant risk measures, and only the probability of $A$ is relevant. A similar observation can also be found in \cite{WZ21}.
The property [CA] has a straightforward preference interpretation; that is,
with marginal distributions fixed, the decision maker prefers non-concentrated portfolios over concentrated ones.  Similarly to [NRC], the desirability of [CA] for regulatory risk measures depends on whether one agrees that $p$-concentration represents a dangerous dependence structure of regulatory concern. This issue has been  discussed by \cite{WZ21} in detail; see also \cite{B19} for evidence and considerations in regulatory practice.
In Section \ref{sec:6-r1}, we will provide a first axiomatic justification of [CA] from a few natural properties on the set of adverse dependence structures, and thus showing that [CA] (or [NRC]) can be motivated endogenously.

None of [CA] and [NRC] implies the other one, although they are closely related. For instance, $X\mapsto \exp(\E[X])$ satisfies  [CA] but not [NRC], whereas $X\mapsto -\ES_p(X)$ satisfies [NRC] but not [CA].

 Many characterization axioms in the literature, including [$p$-TA] and [NRC], compare  the value of a risk measure applied to a portfolio with a combined value of the risk measure applied to individual risks. For instance, subadditivity means that a merger does not create extra risk \citep[]{ADEH99}, convexity means diversification does not increase risk level \citep[]{FS02},
and comonotonic additivity means that a comonotonic portfolio does not receive any risk reduction \citep[]{K01, MM04}.
In contrast, the property [CA] is defined by comparing two portfolios, not comparing values of the specific risk measure; thus this property is free of the specific functional form.
For instance, if $\rho$ satisfies [CA], then so is $f\circ \rho$ for any increasing function $f$; such a feature is not shared by the above properties in the risk measure literature, although it widely appears in the literature of decision theory.

Similar to the translation between [NRC] and [$p$-TA], the property
[CA] can also be translated to a mathematical property that is easier to analyze.
This property, called \emph{$p$-concentration aversion} [$p$-CA], will be the central property analyzed in this paper.

\begin{definition}\label{def:2}
Let $p\in(0,1)$. A risk measure $\rho$ satisfies \emph{$p$-concentration aversion}  if $\rho(X+Y)\le \rho(X'+Y')$ for all  $(X,Y)$ and $p$-concentrated $(X',Y')$  satisfying $X\laweq X'$ and $Y\laweq Y'$.
This property is denoted by [$p$-CA].\end{definition}
We first verify that [CA] can be replaced by [$p$-CA] for some $p\in (0,1)$ in our subsequent analysis.
\begin{proposition}\label{prop:1}
For a risk measure  $\rho$ on $\mathcal X$,
the following are equivalent.
\begin{itemize}
\item[(i)] $\rho$ satisfies {\rm [LI]} and  {\rm [CA]}.
\item[(ii)] $\rho$ satisfies   {\rm[}$p$-{\rm CA}{\rm]} for some $p \in(0,1)$.
\end{itemize}
\end{proposition}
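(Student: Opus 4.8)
The plan is to establish the two implications separately, in each case reducing the comparison of two portfolios with fixed marginals to a statement about a single, canonically chosen $p$-concentrated coupling. For the direction (ii) $\Rightarrow$ (i), the first step is to recover law invariance: fixing the level $p$ for which [$p$-CA] holds, I would apply the defining inequality with $(X,Y)$ and a $p$-concentrated $(X',Y')$ and then again with the roles swapped — more precisely, take $Y=0$ (the constant zero, which is in $\X$ since $\X$ is a convex cone) so that any pair $(X,0)$ with $X\laweq X'$ forces $\rho(X)\le\rho(X')$, and since $(X',0)$ is trivially $p$-concentrated, the reverse inequality also holds, giving $\rho(X)=\rho(X')$ whenever $X\laweq X'$; this is [LI]. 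Once [LI] is in hand, [CA] with any fixed event $A$ of probability $1-p$ follows: given $X\laweq X'$, $Y\laweq Y'$ with $X',Y'$ sharing the tail event $A$, the pair $(X',Y')$ is $p$-concentrated (its components share the $p$-tail event $A$ by definition of $p$-concentration), so [$p$-CA] directly yields $\rho(X+Y)\le\rho(X'+Y')$. Here one needs an atomless space to guarantee the existence of such an event $A$ with $\p(A)=1-p$, which is available by hypothesis.

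For the harder direction (i) $\Rightarrow$ (ii), fix the event $A$ supplied by [CA] and set $p=1-\p(A)\in(0,1)$; I would show $\rho$ satisfies [$p$-CA] for this particular $p$. Let $(X,Y)$ be arbitrary and let $(X',Y')$ be $p$-concentrated with $X\laweq X'$, $Y\laweq Y'$; by definition $X'$ and $Y'$ share a common $p$-tail event, say $B$, with $\p(B)=1-p=\p(A)$. The goal is to produce a pair $(\widetilde X,\widetilde Y)$ with $\widetilde X\laweq X'$, $\widetilde Y\laweq Y'$, sharing the specific tail event $A$, and such that $\widetilde X+\widetilde Y\laweq X'+Y'$; then [CA] applied to $(X,Y)$ versus $(\widetilde X,\widetilde Y)$ combined with [LI] (which [CA] does not itself give, so I must either assume [LI] is part of (i) — it is — or note it is needed) applied to $\widetilde X+\widetilde Y\laweq X'+Y'$ yields $\rho(X+Y)\le\rho(\widetilde X+\widetilde Y)=\rho(X'+Y')$. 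The construction of $(\widetilde X,\widetilde Y)$ is the crux: on an atomless space one can find a measure-preserving map (or simply use that $X'$ restricted to $B$ and to $B^c$ has well-defined conditional laws) to ``transport'' the pair $(X',Y')$ from the tail event $B$ to the tail event $A$ while preserving the joint law; concretely, since $\p(A)=\p(B)$, there is a bijective measure isomorphism between $(A,\p(\cdot\mid A))$ and $(B,\p(\cdot\mid B))$ and likewise for the complements, and composing $(X',Y')$ with it produces $(\widetilde X,\widetilde Y)$ with the same joint distribution but now $\widetilde X,\widetilde Y$ sharing the tail event $A$. Because the joint law is preserved, $\widetilde X+\widetilde Y\laweq X'+Y'$ automatically, and marginals are preserved too.

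The main obstacle I anticipate is the rearrangement/transport argument in the second direction: one must be careful that the event one transports to is genuinely a tail event for \emph{both} transported components simultaneously, i.e.\ that $\widetilde X(\omega)\ge\widetilde X(\omega')$ and $\widetilde Y(\omega)\ge\widetilde Y(\omega')$ for a.s.\ all $\omega\in A$, $\omega'\in A^c$ — this is exactly what is inherited from $B$ being a common $p$-tail event of $X'$ and $Y'$, provided the isomorphism is chosen to map $A\to B$ and $A^c\to B^c$ (not mixing the two pieces). A clean way to avoid set-surgery subtleties is to invoke the standard fact, valid on atomless spaces, that any random vector admits a version whose components share any prescribed $p$-tail event — essentially because the joint law together with the quantile-coupling on the $(1-p,1]$-block of a uniform can be realized on any event of the right probability. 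With that lemma, the argument is short; without it, the measure-isomorphism bookkeeping is the one place where care is required. Everything else — extracting [LI] from [$p$-CA], and deducing [CA] from [$p$-CA] once [LI] holds — is routine given Definition \ref{p-tail-event-a}(iii), which makes ``sharing a tail event of probability $1-p$'' literally the same as ``being $p$-concentrated.''
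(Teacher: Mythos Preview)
Your proposal is correct and follows essentially the same route as the paper: for (ii) $\Rightarrow$ (i) you extract [LI] from [$p$-CA] by taking the second component to be the constant $0$ and applying the inequality in both directions, then deduce [CA] for any event of probability $1-p$; for (i) $\Rightarrow$ (ii) you set $p=1-\p(A)$, transport a $p$-concentrated pair from its common tail event $B$ to the prescribed event $A$ preserving the joint law (the paper invokes the analogous construction from Wang--Zitikis), and combine [CA] with [LI] to conclude. The only cosmetic difference is notation and your more explicit discussion of the measure-isomorphism step, which the paper handles by citation.
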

\begin{proof}
``(ii) $\Rightarrow$ (i)": We first show  [$p$-CA] implies [LI].  Let $Y=Y'=0$.  Take identical distributed $X$ and $X'$,  and note that  $(X',Y')$ is $p$-concentrated since $Y'$ is a constant. Property  [$p$-CA] implies that $\rho(X)=\rho(X+Y)\le \rho(X'+Y')=\rho(X')$, and exchanging the positions of $(X,Y)$ and $(X',Y')$ we also have $\rho(X')=\rho(X'+Y')\le \rho(X+Y)=\rho(X)$. Therefore,  $\rho$ is law invariant. To verify [CA], take any event $A$ with probability
$1-p$, and it is straightforward that $\rho$ satisfies [CA] with $A$ being the stress event.

 ``(i) $\Rightarrow$ (ii)": Suppose that $\rho$ satisfies [CA] with $A$ being the
stress event, and let $p=1-\mathbb{P}(A)$. Let $X,Y \in\X$ be two random variables which share a tail event
$B$ of probability $1-p.$
It suffices to show that for  any $\widetilde{X}, \widetilde{Y} \in \mathcal{X}$ with $\widetilde{X} \laweq X $ and  $ \widetilde{Y} \laweq Y$, we have  $
\rho(\widetilde X+\widetilde Y)\leq \rho(X+Y)$.
Similar to the proof of  Proposition 4 in \cite{WZ21}, we   construct two random variables  $X^{\prime}, Y^{\prime}\in\X$   such that  $X'$ and $Y'$ share the same  tail event $A$,  and $\left(X^{\prime}, Y^{\prime}\right)$ and $(X, Y)$ are identically distributed. Using   [LI], we have $\rho\left(X^{\prime}+Y^{\prime}\right)=\rho(X+Y)$.  It then follows from [CA],  $\widetilde{X} \laweq X' $ and  $ \widetilde{Y} \laweq Y'$ that   $$
\rho(\widetilde X+\widetilde Y)\leq\rho\left(X^{\prime}+Y^{\prime}\right)=\rho(X+Y),
$$
which completes the proof.\end{proof}

It is immediate from Theorem 5 of \cite{WZ21} that $\ES_p$ satisfies [$p$-CA].
Moreover, the mean $\E$ and  convex combinations of $(\E,\ES_p)$ such as $\lambda \E+ (1-\lambda)\ES_p$ for $\lambda \in (0,1)$ also satisfy [$p$-CA].
For  applications in regulatory risk assessment, the value of $p$ should  be close to $1$, indicating an emphasis on tail events with large losses that happen with a small probability. In \cite{B19}, the choice of $p$ in $\ES_p$ is $0.975$.

 In the following sections, we will formally study risk measures with the property of [$p$-CA]; equivalently, they are law invariant risk measures satisfying [CA].

\begin{remark}
The property [$p$-CA] is defined for an arbitrary but fixed $p\in(0,1)$.
If we allow $p$ to take value $0$, then [$p$-CA] in Definition \ref{def:2}  degenerates to the property that $\rho(X+Y)=\rho(X'+Y')$ for any $X\laweq X'$ and $Y\laweq Y'$.
Such a property is called   \emph{dependence neutrality} by \cite{WW20}, who showed that this property is only satisfied by a transformation of the mean.
\end{remark}

\section{Concentration aversion characterizes mean-ES criteria}\label{sec:3}\label{sec:3}
In this section, we present our first main result  that the property [$p$-CA] characterizes the class of functionals that are transformations of ${\rm{ES}}_p$ and the mean.
In this  and the next sections, we assume that $\rho$ is a risk measure on $\mathcal{X}=L^\infty$, which is the standard choice in the risk measure literature \citep[see e.g.,][]{FS16}.
The extension
of $\mathcal{X}$ to more general spaces  will be  discussed in  Section \ref{sec:5}.

\subsection{Two technical lemmas}

We first collect two lemmas that will become useful tools in the proof of our main result.
Denote by $F_X$   the distribution function of a random variable $X$. 
 Let $F_X^{-1}$ be the left quantile function of $X$, i.e., $$F_X^{-1}(p)={\rm VaR}_p(X)=\inf\{x: F_X(x)\ge p\}.$$ Noting that the probability space is atomless,  there exists a uniform random variable $U$ on $[0,1]$ such that  $F_X^{-1}(U)=X$ a.s.; see e.g., Lemma A.32 of \cite{FS16}. Denote by $\essinf X $ and $\esssup X$ the essential infimum and essential supremum of a random variable $X$, respectively.  Moreover,
define
$$
 L(F_X)=\esssup  X -\essinf X=F_X^{-1}(1-)-F_X^{-1}(0+),
$$
and  let $ T(F_X)$ be the distribution of $F_X^{-1}(U)/2+F_X^{-1}(1-U)/2$ for $U\sim {\rm U}(0,1)$.
The first lemma below discusses the relationship between $L(F_X)$ and $L(T(F_X))$. The second lemma of \cite{WW20} is highly nontrivial, which gives the existence of identically distributed random variables whose difference is a pre-specified random variable with mean $0$.

\begin{lemma}\label{lem:RW}
We have $L(F_X) \ge 2 L(T(F_X)) $ for any random variable $X\in L^\infty$.
\end{lemma}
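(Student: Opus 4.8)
The plan is to compute $L(T(F_X))$ directly from the definition and bound it using the fact that $F_X^{-1}$ is monotone. Write $Y = F_X^{-1}(U)/2 + F_X^{-1}(1-U)/2$ with $U \sim \mathrm{U}(0,1)$, so that $T(F_X)$ is the law of $Y$. Since $U$ and $1-U$ have the same distribution, $Y \laweq F_X^{-1}(1-U)/2 + F_X^{-1}(U)/2$, i.e.\ $Y$ is already in the symmetric form we wrote. The quantity $L(T(F_X)) = \esssup Y - \essinf Y$, so it suffices to show $\esssup Y - \essinf Y \le \tfrac12(\esssup X - \essinf X)$. Write $m = F_X^{-1}(0+) = \essinf X$ and $M = F_X^{-1}(1-) = \esssup X$, so that $m \le F_X^{-1}(u) \le M$ for $u$ in a set of full Lebesgue measure in $(0,1)$.

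First I would establish the upper bound on $\esssup Y$. For almost every $\omega$ we have $F_X^{-1}(U(\omega)) \le M$ and $F_X^{-1}(1-U(\omega)) \le M$, hence $Y(\omega) \le M$; but this gives $\esssup Y \le M$, which is too weak on its own. The key refinement is to use that $F_X^{-1}$ is increasing: at least one of the two arguments $U$ and $1-U$ is $\le 1/2$, so at least one of $F_X^{-1}(U)$, $F_X^{-1}(1-U)$ is $\le F_X^{-1}(1/2) =: \mu$. Therefore $Y \le \tfrac12(M + \mu)$ a.s., giving $\esssup Y \le \tfrac12(M+\mu)$. Symmetrically, at least one argument is $\ge 1/2$, so at least one of the two terms is $\ge F_X^{-1}(1/2+) \ge \mu$ (using monotonicity and that $F_X^{-1}(1/2) \le F_X^{-1}(1/2+)$ — one must be slightly careful with the left/right continuity here, but the essential-infimum bound only needs $F_X^{-1}(u) \ge \mu$ for $u \ge 1/2$ up to a null set, which holds since $F_X^{-1}$ is increasing and its value at $1/2$ is $\mu$). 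This yields $Y \ge \tfrac12(m + \mu)$ a.s., so $\essinf Y \ge \tfrac12(m+\mu)$. Subtracting, $L(T(F_X)) = \esssup Y - \essinf Y \le \tfrac12(M+\mu) - \tfrac12(m+\mu) = \tfrac12(M - m) = \tfrac12 L(F_X)$, which is the claim.

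The main obstacle I anticipate is the careful handling of the monotone function $F_X^{-1}$ at the median and at the endpoints: $F_X^{-1}$ is increasing and left-continuous but may have jumps, so statements like ``$F_X^{-1}(u) \le \mu$ for all $u \le 1/2$'' are true for the left-continuous version, while the companion bound needs the value just above $1/2$. The clean way around this is to avoid pointwise claims at $u = 1/2$ altogether: split $(0,1)$ into $(0,1/2]$ and $(1/2,1)$, note that on the event $\{U \le 1/2\}$ the first term $F_X^{-1}(U)/2 \le \mu/2$ while the second $F_X^{-1}(1-U)/2 \le M/2$, and on $\{U > 1/2\}$ the roles swap, so in both cases $Y \le (\mu + M)/2$; the lower bound is handled by the mirror-image argument applied to $-X$ (whose quantile function is $-F_X^{-1}(1-\,\cdot\,)$ up to endpoints), or directly in the same fashion. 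Everything else is a one-line subtraction, so once the median-splitting is set up correctly the proof is short.
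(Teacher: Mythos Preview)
Your proposal is correct and follows essentially the same route as the paper's proof: both set $\mu=F_X^{-1}(1/2)$, use the monotonicity of $F_X^{-1}$ together with the fact that one of $U,1-U$ lies on each side of $1/2$ to sandwich $Y$ between $(m+\mu)/2$ and $(M+\mu)/2$, and then subtract. Your extra care about left/right continuity at the median is not needed---since $F_X^{-1}$ is increasing, $u\ge 1/2$ already gives $F_X^{-1}(u)\ge F_X^{-1}(1/2)$ without invoking $F_X^{-1}(1/2+)$---so the argument is even shorter than you suggest.
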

\begin{proof}
Write $Y=F_X^{-1}(U)/2+F_X^{-1}(1-U)/2$.
It is easy to verify that
$$   \frac{  F_X ^{-1}( 0+ ) + F_X ^{-1}( 0.5 )}{2}   \le  Y \le \frac{  F_X ^{-1}( 0.5 ) + F_X ^{-1}( 1- )}{2}. $$
Hence,
$$ \essinf X  \le  \frac{  F_X ^{-1}( 0+ ) + F_X ^{-1}( 0.5 )}{2}   \le  \essinf Y \le  \esssup Y  \le \frac{  F_X ^{-1}( 0.5 ) + F_X ^{-1}( 1- )}{2}  \le \esssup X. $$
As a consequence, we obtain
 $$
   L(T(F_X)) \le  \frac{ \esssup X - \essinf X  }{2} =  \frac{L(F_X)}{2},
 $$
thus showing the lemma.
\end{proof}
\begin{lemma}[Lemma 1 of \cite{WW20}]\label{lm-WW}
For  a random variable $X$ with $\E[X]=0$ , there exist identically distributed random variables $V$ and $V'$ such that $ V - V' \laweq X$ and $L(V) = L(V')\le L(X)$.
\end{lemma}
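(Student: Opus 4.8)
The plan is to read the statement as a \emph{coupling/transport} assertion: a mean-zero law is exactly the law of a difference $V-V'$ of two equidistributed (but in general dependent) random variables, and the spread of $V$ can be kept no larger than that of $X$. My strategy has three parts: (a) for finitely supported $X$ with rational masses, recast the problem as a \emph{cyclic transport} on a finite set — a circulation around a cycle whose balance condition is precisely $\E[X]=0$; (b) control the range of that transport by a greedy rearrangement lemma; (c) pass to a general $X\in L^\infty$ by weak approximation. Note that $L(V)=L(V')$ is automatic once $V\laweq V'$, since $L$ depends only on the law; the real content is the range bound $L(V)\le L(X)$ together with $V-V'\laweq X$.

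The combinatorial ingredient I would isolate first: for $a_1,\dots,a_n\in\R$ with $\sum_{i=1}^n a_i=0$, there is a permutation $\sigma$ of $\{1,\dots,n\}$ whose partial sums $S_j:=\sum_{l\le j} a_{\sigma(l)}$ (with $S_0=S_n=0$) all lie in $[\min_i a_i,\max_i a_i]$. This follows from a greedy algorithm: if the running sum is $\ge 0$ append an unused negative term, otherwise an unused positive term; one checks the required term always exists (else the unused entries would sum to a quantity of the wrong sign) and that the interval constraint is preserved by a one-line induction. Now suppose $X$ is uniform on the multiset $\{a_1,\dots,a_n\}$ (equivalently, $X$ has finitely many values with rational probabilities). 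Put $p_j:=-S_j$, indexed by $\mathbb{Z}/n\mathbb{Z}$ so that $p_n=p_0$, let $J$ be uniform on $\mathbb{Z}/n\mathbb{Z}$ (available since the space is atomless), and set $V:=p_J$, $V':=p_{J+1}$. Then $V$ and $V'$ are both uniform on $\{p_0,\dots,p_{n-1}\}$, hence $V\laweq V'$ and $L(V)=L(V')$; moreover $V-V'=p_J-p_{J+1}=S_{J+1}-S_J=a_{\sigma(J+1)}$, which is uniform on $\{a_1,\dots,a_n\}$, so $V-V'\laweq X$; and $L(V)=\max_j S_j-\min_j S_j\le \max_i a_i-\min_i a_i=L(X)$.

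For a general $X\in L^\infty$ with $\E[X]=0$, I would choose finitely supported mean-zero $X_n$ with rational masses, $X_n\to X$ in distribution and $L(X_n)\le L(X)$ (discretize the essential range of $X$, re-center to mean zero, then rationalize the masses without killing any atom). Applying the previous step gives $(V_n,V_n')$ with $V_n\laweq V_n'$, $V_n-V_n'\laweq X_n$ and $L(V_n)=L(V_n')\le L(X)$; shift both by the common constant $\essinf V_n$, which changes neither $V_n-V_n'$ nor the equality of laws, so that $V_n,V_n'$ now take values in $[0,L(X)]$. The family $\{(V_n,V_n')\}$ is then tight in $\R^2$; along a weakly convergent subsequence $(V_n,V_n')\to(V,V')$ one gets $V\laweq V'$ (limits of equal laws), $V-V'\laweq\lim_n X_n=X$ (continuity of $(x,y)\mapsto x-y$ plus uniqueness of the weak limit), and, since $V_n,V_n'$ stay in the compact set $[0,L(X)]$, the portmanteau theorem forces $V,V'$ into $[0,L(X)]$, hence $L(V)=L(V')\le L(X)$. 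When $X\notin L^\infty$ the spread inequality is vacuous and only existence is needed, which follows from a similar but more delicate limiting argument; since the applications in this paper use $X\in L^\infty$ I would not dwell on it.

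The step I expect to be the main obstacle is this final weak-limit passage: because only $V_n-V_n'$ is pinned down, nothing a priori prevents $V_n$ from drifting off to infinity, so the re-centering (which, combined with $L(V_n)\le L(X)$, confines the whole couple to a fixed compact set) is essential, and one then needs the range bound to survive the limit via lower semicontinuity of the (essential) support under weak convergence. The discrete cyclic construction itself is elementary once the greedy rearrangement lemma is in hand; that lemma is the only genuinely combinatorial input, and everything else is bookkeeping.
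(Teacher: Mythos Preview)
The paper does not prove this lemma at all: it is quoted verbatim as ``Lemma 1 of \cite{WW20}'' and used as a black box in the proof of Theorem \ref{th-p-CA}, so there is no in-paper argument to compare against.

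That said, your proposal is sound for the case $X\in L^\infty$, which is precisely how the lemma is invoked here (the $V_i$'s in the proof of Theorem \ref{th-p-CA} are built from distributions supported in $(a-\epsilon,a+\epsilon)$ or $(b-\epsilon,b+\epsilon)$). The cyclic-transport idea is clean: the greedy rearrangement lemma (partial sums of a mean-zero sequence can be kept in $[\min_i a_i,\max_i a_i]$) is correct with the proof you sketch, and setting $V=p_J$, $V'=p_{J+1}$ with $p_j=-S_j$ and $J$ uniform on $\mathbb Z/n\mathbb Z$ does yield $V\laweq V'$, $V-V'\laweq X$, and $L(V)\le L(X)$. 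The passage to general bounded $X$ via tightness is also fine; your re-centering step is exactly the right move to prevent drift, and the range bound does survive weak limits because the limit law is supported in the closed set $[0,L(X)]$. The only part you leave open is $X\in L^1\setminus L^\infty$, where the spread bound is vacuous but existence of the coupling still needs an argument (your tightness route breaks down there without an additional uniform-integrability or truncation device); since the paper only applies the lemma to bounded $X$, this gap is immaterial for the present purposes.
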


\subsection{The main characterization result}

We are now ready to present our main result in this section on the characterization of functionals satisfying [$p$-CA].
In what follows, we denote by $\H$ the half-space $\{(x,y)\in\R^2: x\ge y\}$.
The proof of Theorem \ref{th-p-CA} requires sophisticated constructions of many random variables, utilizing both Lemmas \ref{lem:RW}  and \ref{lm-WW}.

\begin{theorem}\label{th-p-CA}
Let $p\in (0,1)$ and $\rho: L^\infty\to(-\infty,\infty]$. The following two statements hold.
\begin{itemize}
\item[(i)]
$\rho$ satisfies {\rm[}$p$-{\rm CA}{\rm]} if and only if it has the form
$f({\rm{ES}}_p,\E)$, where $f:\H\to (-\infty,\infty]$ is  increasing in its first argument.
\item[(ii)]
$\rho$ satisfies {\rm[M]} and {\rm[}$p$-{\rm CA}{\rm]} if and only if it has the form
$f({\rm{ES}}_p,\E)$, where $f:\H\to (-\infty,\infty]$ is  increasing in both arguments.
\end{itemize}
\end{theorem}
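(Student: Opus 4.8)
My plan is to prove (i) in full and then deduce (ii). It is convenient to write $Z=[\zeta]\oplus[\eta]$ for the random variable whose quantile function equals that of $\zeta$ on $[0,p)$ and that of $\eta$ on $[p,1)$ --- which forces $\esssup\zeta\le\essinf\eta$ and gives $\ES_p(Z)=\E[\eta]$ and $\E[Z]=p\E[\zeta]+(1-p)\E[\eta]$ --- where $\zeta,\eta$ are the bottom-$p$ and top-$(1-p)$ parts of $Z$; I also fix once and for all an event $T_0$ with $\p(T_0)=1-p$. The ``if'' direction of (i) is quick: for $\rho=f(\ES_p,\E)$ with $f$ increasing in its first argument, any $p$-concentrated $(X',Y')$ and any $(X,Y)$ with the same marginal laws satisfy $\E[X+Y]=\E[X'+Y']$, while $\ES_p(X+Y)\le\ES_p(X'+Y')$ because $\ES_p$ itself obeys {[}$p$-{CA}{]} (Theorem 5 of \cite{WZ21}); since $(\ES_p(\cdot),\E[\cdot])$ always lies in $\H$, monotonicity of $f$ in the first argument gives {[}$p$-{CA}{]}. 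For the ``if'' direction of (ii) one adds that when $f$ is increasing in both arguments the map $f(\ES_p,\E)$ is monotone, because $\ES_p$ and $\E$ are.

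For the ``only if'' direction of (i) I would first invoke Proposition \ref{prop:1} to get {[LI]}, so that $\rho$ is a function of the law alone, and then build everything on a \emph{re-coupling principle}: if $Z=A+B$ with $(A,B)$ $p$-concentrated (sharing $T_0$) and $(A^*,B^*)$ keeps the laws of $A,B$ and their restrictions to $T_0^c$ but re-couples their restrictions to $T_0$ --- so $(A^*,B^*)$ is again $p$-concentrated with the same marginals --- then $\rho(A^*+B^*)=\rho(A+B)$, since {[}$p$-{CA}{]} applies with each of the two pairs in turn playing the role of the ``free'' pair and the other the ``$p$-concentrated'' pair. (The analogous statement holds for re-coupling on $T_0^c$.) From this I extract two moves. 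The \emph{$T$-move}: applying the principle to the comonotonic splitting $Z=\tfrac12Z+\tfrac12Z$ and switching the two copies of the top part $\tfrac12\eta$ from the comonotonic to the antithetic coupling inside $T_0$ shows $\rho([\zeta]\oplus[\eta])=\rho([\zeta]\oplus[T(\eta)])$, with $T$ as in Lemma \ref{lem:RW}; this preserves $\ES_p$ and $\E$, and by Lemma \ref{lem:RW} halves the range, $L(T(\eta))\le\tfrac12L(\eta)$, while $\essinf T(\eta)\ge\essinf\eta$ keeps the object well defined. A mirror $T$-move acts on $\zeta$. The \emph{flattening move}: with $e=\ES_p(Z)$ and $c=\esssup\zeta$, if $L(\eta)\le e-c$ then Lemma \ref{lm-WW}, applied to a mean-zero variable with the law of $W-e$ (where $W\sim\eta$), yields identically distributed $V\laweq V'$ with $V-V'$ distributed as $W-e$ and $L(V)\le L(\eta)\le e-c$; after translating $V$ into $[c,e]$, the decomposition $A=[\zeta]\oplus[\mathrm{law}\,V]$, $B=[\delta_0]\oplus[\mathrm{law}(e-V')]$ is $p$-concentrated, sums in law to $Z$ when the top parts are coupled through the joint law of $(V,e-V')$, and sums to $[\zeta]\oplus[\delta_e]$ when instead $(V,e-V)$ is used; hence $\rho([\zeta]\oplus[\eta])=\rho([\zeta]\oplus[\delta_e])$, and symmetrically $\rho([\zeta]\oplus[\delta_e])=\rho([\delta_\ell]\oplus[\delta_e])$ whenever $L(\zeta)\le e-\ell$ with $\ell=\E[\zeta]$.

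I would then assemble these as follows. For a fixed non-constant $X\in L^\infty$ with $e=\ES_p(X)>\ell=\tfrac1p\int_0^pF_X^{-1}$, the top $T$-move does not change $c=\esssup\zeta=F_X^{-1}(p^-)<e$ and geometrically shrinks $L(\eta)$, so after finitely many applications $L(\eta)\le e-c$ and flattening gives $\rho(X)=\rho([\zeta]\oplus[\delta_e])$; then finitely many bottom $T$-moves bring $L(\zeta)\le e-\ell$ without disturbing the now-constant top, and bottom-flattening gives $\rho(X)=\rho(Y_{e,\ell})$, where $Y_{e,\ell}$ takes value $\ell$ with probability $p$ and $e$ with probability $1-p$. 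Since $(e,\ell)$ and $(\ES_p(X),\E[X])$ determine each other, this proves $\rho=f(\ES_p,\E)$ with $f(a,b):=\rho(Y_{a,(b-(1-p)a)/p})$, well defined by {[LI]}. To see $f$ is increasing in the first argument I would, given $a\le a'$ with $(a,b),(a',b)\in\H$, decompose $X:=Y_{a,(b-(1-p)a)/p}$ on a realisation $X=h\mathbf 1_{\{U\le p\}}+a\mathbf 1_{\{U>p\}}$ ($U\sim\mathrm U(0,1)$, $h=(b-(1-p)a)/p$) by setting $B=N$ on $\{p<U\le p+\tfrac{1-p}2\}$, $B=-N$ on $\{p+\tfrac{1-p}2<U\}$, $B=0$ on $\{U\le p\}$, and $A=X-B$: then $A+B=X$, $\E[A]+\E[B]=\E[X]=b$, and a direct computation shows $\ES_p(A)+\ES_p(B)$ is continuous in $N\ge0$, equals $a$ at $N=0$, and tends to $\infty$; choosing $N$ so that it equals $a'$ and letting $W$ be the $p$-concentrated recombination of the marginals of $A,B$, {[}$p$-{CA}{]} yields $f(a,b)=\rho(X)\le\rho(W)=f(a',b)$. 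This proves (i). For (ii) it remains to note that under {[M]}, since $Y_{a,(b-(1-p)a)/p}\preceq_{\mathrm{st}}Y_{a,(b'-(1-p)a)/p}$ for $b\le b'$, these two-point laws can be coupled pointwise, so {[M]} and {[LI]} force $f(a,b)\le f(a,b')$, and $f$ is increasing in both arguments.

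I expect the main difficulty --- and the point where Lemmas \ref{lem:RW} and \ref{lm-WW} genuinely work together --- to be the range control in the flattening move: Lemma \ref{lm-WW} only bounds the range of the variables it produces by the range of the \emph{target} law, which typically exceeds the available room $e-c$ (or $e-\ell$), so the flattening cannot be carried out in one stroke. Pre-shrinking the range by iterating the $T$-move of Lemma \ref{lem:RW} resolves this, and the iteration terminates after finitely many steps precisely because only a fixed positive gap must be beaten and each $T$-move is an \emph{exact}, $\rho$-preserving operation --- so that no continuity of $\rho$, which is not among the hypotheses of Theorem \ref{th-p-CA}, is required.
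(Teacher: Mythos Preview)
Your proposal is correct and follows essentially the same approach as the paper: both arguments rest on the observation that $\rho$ is constant on $p$-concentrated pairs with fixed marginals, use Lemma~\ref{lem:RW} to iterate the $T$-transform and shrink the range of the top/bottom parts, and then invoke Lemma~\ref{lm-WW} to flatten each part to a point mass, arriving at the two-point representative $Y_{e,\ell}$. The only noteworthy differences are organizational: the paper shrinks the top and bottom parts \emph{simultaneously} (requiring $\epsilon<(a-b)/4$) and then flattens both at once, whereas you shrink and flatten the top first and then repeat for the bottom; and for monotonicity in the first argument the paper uses an explicit two-point construction with a case split on $p\le 1/2$ versus $p>1/2$, while your intermediate-value argument with the auxiliary $B$ avoids that case analysis.
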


\begin{proof}
(i) The sufficiency statement follows from the fact that $\ES_p$ takes its largest possible value for a $p$-concentrated portfolio among all portfolio vectors with given marginal distributions.  To be specific, by  Theorem 5 of \cite{WZ21},  $\left(X_{1},  X_{2}\right)$ is $p$-concentrated if and only if  $\left(X_{1},  X_{2}\right)$ maximizes the $\mathrm{ES}_{p}$ aggregation; that is, for all  $(X',Y')$ and  $p$-concentrated $(X,Y)$  satisfying $X\laweq X'$ and $Y\laweq Y'$, one has ${\ES}_p(X'+Y')\le{\ES}_p(X+Y)$  and  $\E[X'+Y']=\E[X+Y]$. Hence, if $\rho$ is of the form  $f({\rm{ES}}_p,\E)$ and $f$ is  increasing in its first argument, we have $\rho(X'+Y')\le \rho(X+Y)$, which implies that $\rho$
 satisfies [$p$-CA].

 We now prove the necessity statement.  First,  it is clear that [$p$-CA] implies that
$ \rho(X+Y) = \rho(X_1 + Y_1) $ if  $(X,Y)$ and $(X_1,Y_1)$ are both $p$-concentrated and $ X \laweq X_1 \in L^{\infty} $, $ Y \laweq Y_1 \in L^{\infty} $.

For any $Z \in L^{\infty}$, denote  by $m = \VaR_p(Z) = F_Z^{-1}(p)$, and by $a$ and $b$ two constants such that
$$a = \ES_p (Z) = \frac{1}{1- p} \int_p^1 F_Z^{-1} (t) \d t   , ~~~  b=\ES^{-}_p(Z) := \frac{1}{p} \int_0^p F_Z^{-1} (t) \d t .$$
Note that   $ \E[Z]= (1-p) a + pb $. We aim to prove $\rho(Z) = \rho(Z^*)$ where $ Z^* \sim (1-p) \delta_a + p \delta_b  $, which justifies that $\rho(Z)$ is  determined only by values of $\ES_p(Z)$ and $\E[Z]$.

There is nothing to show if $a=b$, which implies that $Z$ is a constant and thus $Z=Z^*$ a.s. We will assume $a>b$ in what follows.

Denote by $G,H$ the distribution functions of $ F_Z^{-1} (U_1)$ and $ F_Z^{-1}(U_2)$, respectively, where $ U_1 \sim \U[ p,1 ] , U_2 \sim \U[0,p] $. Then we write $F_Z = (1-p) G + p H $
with  $  \mathrm{Support} (G) \subseteq [m,  \esssup (F_Z) ]$ and $\mathrm{Support} (H) \subseteq [ \essinf (F_Z) , m ] $.
Take $U \sim \U[0,1]$ and define
$$  X = Y = X_1 = \frac{F_Z^{-1}(U) }{2} ~~\text{and}~~
Y_1 =\left \{
\begin{aligned}
\frac{F_Z^{-1} (p-U) }{2} ~~ &\text{if}~ U < p,  \\
\frac{F_Z^{-1} (1 + p - U) }{2} ~~ &\text{if}~ U > p.
\end{aligned}\right.$$
We can verify that  $(X,Y,X_1,Y_1)$ is $p$-concentrated with common $p$-tail event $\{ U > p \}$, $X\laweq X_1$ and $Y\laweq Y_1$. Moreover, by letting $Z_1=X_1+Y_1$, we have
 $$
 X+Y =  F_Z^{-1} (U)  \laweq Z ~ ~\text{and}~~  F_{Z_1} = (1-p) T(G) + p T(H).
 $$
 Note that $\ES_p(Z_1)=\ES_p(Z)$ and $\E[Z_1]=\E[Z]$.
Properties [$p$-CA] and [LI] lead to $ \rho(Z) = \rho(Z_1)$. By Lemma \ref{lem:RW}, we further obtain
$$
\esssup (Z_1) - F^{-1}_{Z_1}(p+)=L(T(G))\le \frac{L(G)}{2}=\frac{1}{2}  ( \esssup (Z) - F^{-1}_{Z}(p+)),
$$
and
$$
F^{-1}_{Z_1}(p) -  \essinf (Z_1)=L(T(H))\le \frac{L(H)}{2}=\frac{1}{2}  ( F^{-1}_Z(p) -  \essinf (Z)).
$$
%
We repeat the above argument to construct $Z_2$ with $Z_1$ replacing the position of $Z$.
Take any  $\epsilon \in (0, ({a-b})/{4})$. For   large enough $n$  (more precisely, $n\ge \log_2 (  {L(F_Z)}/{ \epsilon} )$), we have
$$  \esssup (Z_n)   -    F^{-1}_{Z_n}(p+) < \epsilon ~~ \text{and} ~~  F^{-1}_{Z_n}(p) -  \essinf (Z_n)   < \epsilon.$$ Combining with  $ \ES_p (Z_n) =  a$ and $ \ES^{-}_p(Z_n) = b ,$
it  then follows that
$$  \p ( a - \epsilon < Z_n < a + \epsilon ) = 1- p  ~~ \text{and} ~~   \p ( b - \epsilon < Z_n < b + \epsilon ) = p.  $$ Note that the above construction preserves the value of $\rho$, that is,
$$ \rho (Z) = \rho(Z_1) = \rho(Z_2) = \dots = \rho(Z_n). $$
Denote by $G_n$ and $H_n$ the distribution functions of $F_{Z_n}^{-1} (U_1)$ and $ F_{Z_n}^{-1}(U_2)$, respectively.
It follows that $F_{Z_n}=(1-p)G_n+pH_n$. Moreover, the mean of  $ G_n $ is $ a$ and the mean of  $ H_n $ is $b $, and
$$ \mathrm{Support} (G_n ) \subseteq (a-\epsilon , a + \epsilon) ~~ \text{and} ~~ \mathrm{Support} (H_n ) \subseteq (b-\epsilon , b + \epsilon). $$

Note that in an atomless probability, there exists a random vector with any specified distribution (e.g., Lemma D.1 of \cite{VW21}).
We take a random vector $(\id_A,V_1,\dots,V_4)$ such that $A$ is an event  independent  of $ (V_1,V_2,V_3,V_4) $ satisfying $\p(A) = 1-p$,
and $(V_1,\dots,V_4)$, whose existence is justified by Lemma \ref{lm-WW}, satisfies
$$ V_1 \laweq V_2, ~ V_1 - V_2 + a \sim G_n ~\text{and}~ \mathrm{Support} (V_1) = \mathrm{Support} (V_2) \subseteq [- \epsilon , \epsilon],$$
$$ V_3 \laweq V_4,~ V_3 - V_4 + b \sim H_n ~\text{and}~ \mathrm{Support} (V_3) = \mathrm{Support} (V_4) \subseteq [- \epsilon,  \epsilon].$$
Define
$$ X = \id_A \left(V_1 + \frac{a}{2}\right) + \id_{A^c} \left( V_3 + \frac{b}{2} \right) , ~~  Y = \id_A  \left(- V_2 + \frac{a}{2}   \right) + \id_{A^c} \left( - V_4 + \frac{b}{2} \right), $$
$$  X^* = \id_A \left(V_1 + \frac{a}{2}\right) + \id_{A^c} \left( V_3 + \frac{b}{2} \right) ,~~ Y^* = \id_A  \left(- V_1 + \frac{a}{2}   \right) + \id_{A^c} \left( -V_3 + \frac{b}{2} \right). $$
Since $  |V_1|,|V_2|,|V_3|,|V_4| \le \epsilon < ({b-a})/{4} $, for any $\omega \in A$ and $ \omega' \in A^c$,  we have
$$  X(\omega) = V_1(\omega) + \frac{a}{2} >  \frac{a}{2} - \epsilon > \frac{b}{2} + \epsilon  >  V_3 (\omega') + \frac{b}{2} = X(\omega').$$
Similarly, we have $  Y(\omega) > Y(\omega')$, $Y^*(\omega) > Y^*(\omega')$ and $X^*(\omega) > X^*(\omega')$. Hence, $ (X,Y,X^*,Y^*) $ is $p$-concentrated with common $p$-tail event $A$, and $ X = X^*$, $ Y \laweq Y^*$, $ X+Y \laweq Z_n $.  Therefore,
$$ \rho(Z) =  \rho(Z_n) = \rho(X+Y) = \rho(X^* + Y^*) =  \rho ( \id_A \times a + \id_{A^c} \times b ) = \rho(Z^*),$$
thus showing the desirable statement   that the value  of $\rho(X)$ only depends on ${\rm ES}_p(X)$ and $\E[X]$, that is, $\rho$ has the form $f({\rm ES}_p,\E)$.

 It remains to prove that the function $x\mapsto f(x,y)$ is increasing for each fixed $y\in\R$.
Suppose $0<p\le 1/2$, and let $X\sim p\delta_{-(1-p)a}+(1-p)\delta_{pa}$ and $Y\sim (1-p)\delta_{-pb+y}+ p \delta_{(1-p)b+y}$ with $0\le b\le a $. For $U\sim {\rm U}(0,1)$, take $X_1=X_2=F_X^{-1}(U)$, $Y_1=F_Y^{-1}(U)$ and $Y_2=F_Y^{-1}(1-U)$. By straightforward calculation, we obtain
$$
\E[X_1+Y_1]=\E[X_2+Y_2]=y,~~~{\rm ES}_p(X_1+Y_1)=p(a-b)+ \frac{pb}{1-p}+y,~~{\rm ES}_p(X_2+Y_2)=p(a-b)+y.
$$
Note that $(X_1,Y_1)$ is $p$-concentrated and $X_1\laweq X_2$, $Y_1\laweq Y_2$. Hence, by using [$p$-CA],
we obtain
$$
f\left(p(a-b)+ \frac{pb}{1-p} +y,y\right)=\rho(X_1+Y_1)\ge \rho(X_2+Y_2)=f\left(p(a-b)+y,y\right).
$$
Since $a-b\ge 0$ and $b\ge 0$  can be arbitrarily chosen, we have that $x\mapsto f(x,y)$ is increasing for each $ y$. Using   similar arguments, monotonicity also holds for $1/2<p<1$. Hence, we complete the proof of (i).

(ii) The sufficiency statement is straightforward. To show the necessity statement, based on the result in (i), it remains to show that  [{\rm M}] implies the monotonicity of  the function $y\mapsto f(x,y)$. Take $A\in\mathcal F$ with probability $1-p$. Define two random variables $X$ and $Y$ such that   $X(\omega)=Y(\omega)=x$ for $\omega\in A$, and $X(\omega)=x_1$, $Y(\omega)=x_2$ for $\omega\in A^c$, where  $x_1\le x_2\le x$. Obviously, we have $X\le Y$,  and it follows that
\begin{align*}
f(x,(1-p)x+px_1)& =f(\ES_p(X),\E[X])\\& =\rho(X)\le \rho(Y)=f(\ES_p(Y),\E[Y])=f(x,(1-p)x+px_2).
\end{align*}
The monotonicity follows from the fact that $x_1\le x_2\le x$ can be arbitrarily chosen.
\end{proof}

\begin{remark}
The functional $\ES_p^-$ is used in the proof  of Theorem \ref{th-p-CA}, but not in its statement.
There is a linear relationship between $\ES_p$, $\ES_p^-$ and $\E$, that is,
$$
p\ES_p^-(X)+(1-p)\ES_p(X)=\E[X].
$$
Therefore, the form $f(\ES_p,\E)$ of the risk measure in Theorem \ref{th-p-CA} can also be represented as $f_1(\ES_p^-,\E)$ or $f_2(\ES_p,\ES_p^-)$ with different conditions on $f_1$ and $f_2$.
\end{remark}

\subsection{Mean-ES portfolio selection}

There is a large literature on mean-risk portfolio selection since \cite{M52} who measured risk by using variance. In the more recent literature,    risk is often measured by a risk measure, such as VaR   \citep[]{BS01, GP05}, ES \citep[]{RU00, RU02, ESW21}, or expectiles \citep[]{B14, L21}. For a recent work on mean-$\rho$ optimization where $\rho$ is a coherent risk measure, see \cite{HK21}.

Remarkably, Theorem \ref{th-p-CA}  gives rise  to an axiomatic foundation for the mean-ES portfolio selection.
Consider a classical optimization problem
\begin{equation}\label{eq:rw2}
\min_{\mathbf a \in A} \mathcal V(g(\mathbf X,\mathbf a))
\end{equation}
where $A$ is a set of possible actions, $\mathcal V:\X\to (-\infty,\infty]$ is an objective functional, $\mathbf X$ is the underlying $d$-dimensional risk vector,
and $g: \R^d \times A \to \R$ is a function representing  the portfolio value. Constraints on the optimization problem can be incorporated into either $A$ or $\mathcal V$.
For instance, one may set $\mathcal V$ to be $\infty$ for positions that violate certain constraints, as we will see below.

We say that the optimization problem \eqref{eq:rw2}
is \emph{a mean-$\rho$ optimization}  for some risk measure $\rho$,
if $\mathcal V$ is determined by $\E$ and $\rho$ and increasing in both. There are two classic versions of mean-$\rho$ optimization problems:
\begin{enumerate}[(a)]
\item Maximizing expected return with a target risk $r\in \R$, that is
\begin{equation}\label{eq:rw1}
\max_{\mathbf a\in A} \E[ - \mathbf a^{\rm T} \mathbf X ]~~~ \mbox{subject to $\rho(\mathbf a^{\rm T} \mathbf X )\le r$},
\end{equation}
where $\mathbf X$ is the vector of losses (negative returns) from individual assets and $A$ is a subset of $\R^d$; recall that $- \mathbf a^{\rm T} \mathbf X $  represents the future portfolio wealth.
By choosing $$\mathcal V(X) =  \E[X]\id_{\{\rho(X)\le r\}} + \infty \id_{\{\rho(X)>r\}}~~\mbox{and}~~g(\mathbf X,\mathbf a)=\mathbf a^{\rm T} \mathbf X$$ with the convention $\infty\times 0=0$,
\eqref{eq:rw1} becomes \eqref{eq:rw2}, which is clearly a mean-$\rho$ optimization.

\item Minimizing risk with a target expected return $u\in \R$, that is,
\begin{equation}\label{eq:rw3}
\min_{\mathbf a\in A} \rho( \mathbf a^{\rm T} \mathbf X )~~~ \mbox{subject to $\E[-\mathbf a^{\rm T} \mathbf X ]\ge u$}.
\end{equation}
This time, by choosing $$\mathcal V(X) =  \rho (X)\id_{\{\E[X]\le -u\}} + \infty \id_{\{\E[X]>-u\}}~~\mbox{and}~~g(\mathbf X,\mathbf a)=\mathbf a^{\rm T} \mathbf X,$$
we arrive again at \eqref{eq:rw2}.
\end{enumerate}
Using Theorem \ref{th-p-CA}, we obtain a characterization of mean-ES   (i.e., mean-$\ES_p$ for some $p\in(0,1)$) optimization problems, which include the classical problems \eqref{eq:rw1} and \eqref{eq:rw3} with $\rho=\ES_p$.

\begin{theorem}\label{th:port}
 An optimization problem \eqref{eq:rw2} is a mean-ES optimization if and only if  its objective $\mathcal V$ satisfies {\rm [M]} and {\rm [}$p$-{\rm CA}{\rm ]} for some $p\in (0,1)$.
\end{theorem}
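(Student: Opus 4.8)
The plan is to obtain Theorem \ref{th:port} as an essentially immediate corollary of Theorem \ref{th-p-CA}(ii), once the definition of a mean-ES optimization is unwound. Recall that \eqref{eq:rw2} is a mean-$\rho$ optimization when $\mathcal V$ is determined by $\E$ and $\rho$ and is increasing in both, and it is a mean-ES optimization precisely when this holds with $\rho=\ES_p$ for some $p\in(0,1)$. The key observation is purely definitional: ``$\mathcal V$ is determined by $\E$ and $\ES_p$ and increasing in both'' says exactly that $\mathcal V=f(\ES_p,\E)$ for some function $f$ that is increasing in each argument, and since $\ES_p(X)\ge\E[X]$ for every random variable, the relevant domain of such an $f$ is precisely the half-space $\H=\{(x,y)\in\R^2:x\ge y\}$ appearing in Theorem \ref{th-p-CA}. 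Thus the statement ``\eqref{eq:rw2} is a mean-$\ES_p$ optimization'' coincides verbatim with the right-hand side of the equivalence in Theorem \ref{th-p-CA}(ii).

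For the ``only if'' direction I would assume \eqref{eq:rw2} is a mean-ES optimization, fix the associated $p\in(0,1)$, and write $\mathcal V=f(\ES_p,\E)$ with $f:\H\to(-\infty,\infty]$ increasing in both arguments; the sufficiency part of Theorem \ref{th-p-CA}(ii) then yields that $\mathcal V$ satisfies {\rm[M]} and {\rm[}$p$-{\rm CA}{\rm]}. For the ``if'' direction I would assume $\mathcal V$ satisfies {\rm[M]} and {\rm[}$p$-{\rm CA}{\rm]} for some $p\in(0,1)$, apply the necessity part of Theorem \ref{th-p-CA}(ii) to get $\mathcal V=f(\ES_p,\E)$ with $f$ increasing in both arguments, and conclude that $\mathcal V$ is determined by $\E$ and $\ES_p$ and increasing in both, i.e.\ \eqref{eq:rw2} is a mean-$\ES_p$ optimization and hence a mean-ES optimization. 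I would also remark in passing that the two classical formulations \eqref{eq:rw1} and \eqref{eq:rw3} with $\rho=\ES_p$ do have objectives of the form $f(\ES_p,\E)$ with $f$ increasing in both arguments (the indicator-penalty construction is monotone in each coordinate), so they are covered by this characterization.

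There is no genuine obstacle here beyond bookkeeping: all the mathematical content sits in Theorem \ref{th-p-CA}(ii). The one point that deserves a word of care is the domain on which $\mathcal V$ acts — Theorem \ref{th-p-CA} is stated for risk measures on $L^\infty$, so to invoke it directly one should read $\X=L^\infty$ as in Section \ref{sec:3} (or appeal to the extensions discussed in Section \ref{sec:5}); with that convention fixed, the equivalence follows by matching the two definitions.
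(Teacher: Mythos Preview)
Your proposal is correct and matches the paper's approach exactly: the paper does not give a separate proof of Theorem~\ref{th:port} but simply states it as an immediate consequence of Theorem~\ref{th-p-CA}, and your argument unwinds precisely this identification between the definition of a mean-ES optimization and the characterization in Theorem~\ref{th-p-CA}(ii).
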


Theorem \ref{th:port}   illustrates that a preference for dependence (i.e., [$p$-CA]) can help to pin down the particular form of  optimization problems, in addition to characterizing risk measures.
In the next section, we continue to explore the relationship between concentration aversion and characterizing risk measures.

\section{Monetary risk measures satisfying [CA]}\label{sec:4}
In this section, we again assume that $\rho$ is a risk measure on $\mathcal{X}=L^\infty$, and further investigate    \emph{monetary risk measures} satisfying [$p$-CA].
A \emph{monetary risk measure} is a risk measure satisfying {\rm  [M]  and {\rm [TI]}; see \cite{FS16}.
 It is well known that monetary risk measures are one-to-one corresponding to acceptance sets.
An \emph{acceptance set} $\mathcal A$ is a subset of $\X$ which is generated by some monetary risk measure $\rho$ via $\mathcal A=\{X\in \mathcal X: \rho(X)\le 0\}$.
Also note that a monetary risk measure $\rho$ is finite on $L^\infty$ as long as it is finite at some $X\in L^\infty$. Therefore, we can safely assume $\rho:L^\infty\to\R$ in this section.

\subsection{Concentration-averse monetary risk measures}
Let us first recall the definition of \emph{second-order stochastic dominance} (SSD). We say that $X$ is second-order stochastically dominated by $Y$, denoted by $X\preceq_{\rm SSD}Y$, if $\E[u(X)]\le \E[u(Y)]$ for all increasing convex functions $u$.\footnote {SSD is also known as increasing convex order in probability theory and stop-loss order in actuarial science.}
We collect two properties  from \cite{MW20}.
\begin{enumerate}
 \item[{[SC]}] SSD-consistency: $\rho(X)\le \rho(Y)$ whenever  $X\preceq_{\rm SSD}Y$.
 \item[{[DC]}] Diversification consistency: $\rho(X+Y)\le\rho(X^c+Y^c)$  whenever $X\laweq X^c$, $Y\laweq Y^c$ and $(X^c,Y^c)$ is comonotonic.
\end{enumerate}
The property [SC] is often called \emph{strong risk aversion} for a preference functional  \citep{HR69, RS70}, while [DC] is called \emph{comovement aversion} \citep{WW20}. By using a risk
measure satisfying [SC], a financial institution makes decisions that are consistent with the common notion of risk aversion and, in particular, favours a risk with small variability over one with a large variability.
\cite{MW20} showed that, for a monetary risk measure, [SC] and [DC] are equivalent, and they called
monetary risk measures satisfying [SC] \emph{consistent risk measure}, which have a  representation based on ES; see their Theorem 3.1.
Since $p$-concentration is weaker than comonotonicity,  [$p$-CA] implies [DC], and hence a monetary risk measure satisfying [$p$-CA] is automatically a consistent risk measure. In the following theorem, a representation of such a risk measure is established.
This result leads to a   class of risk measures   \eqref{eq-p} that is new to the literature.
In what follows, we say that a  real-valued function $g$ satisfies the $1$-Lipschitz condition if
 \begin{align}\label{eq-condition}
|g(x)-g(y)|\le |x-y|~~\mbox{ for}~x,y \mbox{~in the domain of $g$}.
\end{align}

\begin{theorem}\label{th-p}
Let $p\in (0,1)$ and $\rho$ be a risk measure on $L^\infty$. Then,
$\rho$ satisfies {\rm [M]}, {\rm [TI]}, {\rm [}$p$-{\rm CA}{\rm ]} and $\rho(0)=0$ if and only if it has the form
\begin{align}\label{eq-p}
\rho(X)=g(\ES_p(X)-\E[X])+\E[X],
\end{align}
for some increasing function $g: [0,\infty)\to \R$  with $g(0)=0$ satisfying the $1$-Lipschitz  condition.
In particular, such $\rho$ is a consistent risk measure.
\end{theorem}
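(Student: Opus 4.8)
The plan is to deduce Theorem~\ref{th-p} from Theorem~\ref{th-p-CA}(ii) by translating each of the requirements [M], [TI], [$p$-CA] and $\rho(0)=0$ into a condition on the representing function. First I would record two preliminary facts. A monetary risk measure that is finite at $0$ (here $\rho(0)=0$) is automatically finite on all of $L^\infty$, so we may take $\rho:L^\infty\to\R$; and since the space is atomless, every point $(x,y)\in\H$ is realized by a bounded random variable, e.g. the two-point law $X\sim(1-p)\delta_x+p\delta_b$ with $b=(y-(1-p)x)/p\le x$, for which $\ES_p(X)=x$ and $\E[X]=y$. Consequently conditions on $\rho$ transfer faithfully to conditions on the bivariate function $f$ supplied by Theorem~\ref{th-p-CA}.

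For sufficiency, I would start from $\rho(X)=g(\ES_p(X)-\E[X])+\E[X]$ and verify the four properties directly: $\rho(0)=g(0)=0$ is immediate, and [TI] holds because $\ES_p-\E$ is translation invariant. Writing $\rho=f(\ES_p,\E)$ with $f(x,y)=g(x-y)+y$ on $\H$, the map $x\mapsto g(x-y)+y$ is increasing since $g$ is, so $\rho$ satisfies [$p$-CA] by the sufficiency half of Theorem~\ref{th-p-CA}(i); and [M] follows from Theorem~\ref{th-p-CA}(ii) once one observes that $y\mapsto g(x-y)+y$ is increasing precisely because the $1$-Lipschitz bound on $g$ controls the decrease of $g(x-y)$ by the increase of $y$. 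Thus the $1$-Lipschitz condition is exactly what is needed, and no more, to make $f(x,y)=g(x-y)+y$ monotone in its second coordinate.

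For necessity, I would invoke Theorem~\ref{th-p-CA}(ii) to get $\rho=f(\ES_p,\E)$ with $f:\H\to\R$ increasing in both arguments, then use [TI] together with realizability to obtain $f(x+c,y+c)=f(x,y)+c$ for all $(x,y)\in\H$ and $c\in\R$. Setting $g(t):=f(t,0)$ for $t\ge0$ then gives $f(x,y)=f((x-y)+y,0+y)=g(x-y)+y$ whenever $x\ge y$, which is the claimed form, with $g(0)=f(0,0)=\rho(0)=0$ and $g$ increasing (from monotonicity of $f$ in its first slot). The one computation requiring care is the $1$-Lipschitz bound: for $0\le s<t$ I would use the translation identity to rewrite $f(t,t-s)=f(s+(t-s),0+(t-s))=g(s)+(t-s)$ and then monotonicity of $f$ in its second argument (valid since $0\le t-s$) to get $g(t)=f(t,0)\le f(t,t-s)=g(s)+(t-s)$, which combined with monotonicity of $g$ yields $0\le g(t)-g(s)\le t-s$. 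Extracting the sharp constant $1$ from the interplay of [M] and [TI] through this substitution is the only genuinely nontrivial point; everything else is bookkeeping on $\H$ once Theorem~\ref{th-p-CA} is in hand.

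Finally, the "in particular" assertion is essentially free: a comonotonic pair shares a $p$-tail event for every $p\in(0,1)$ (Theorem~4 of \cite{WZ21}), so [$p$-CA] implies the diversification-consistency property [DC], and a monetary risk measure with [DC] is a consistent risk measure by Theorem~3.1 of \cite{MW20}. Alternatively one argues directly from the representation: $f$ is increasing in both coordinates and both $\ES_p$ and $\E$ are monotone with respect to $\preceq_{\rm SSD}$, hence $\rho=f(\ES_p,\E)$ satisfies [SC], i.e. is a consistent risk measure.
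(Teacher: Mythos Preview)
Your proof is correct and follows essentially the same route as the paper: both invoke Theorem~\ref{th-p-CA} to obtain $\rho=f(\ES_p,\E)$, use [TI] to reduce to $g(t)=f(t,0)$, and extract the $1$-Lipschitz bound from monotonicity of $f$ in its second argument. Your sufficiency argument for [M] is slightly more streamlined---you observe that $f(x,y)=g(x-y)+y$ is increasing in both coordinates and appeal to the sufficiency half of Theorem~\ref{th-p-CA}(ii), whereas the paper carries out a direct two-case verification---but the content is the same.
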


\begin{proof}

Let us first prove sufficiency. Obviously, $\rho$ of the form \eqref{eq-p} satisfies [TI] and $\rho(0)=0$. By Theorem \ref{th-p-CA}, we obtain that $\rho$ satisfies [$p$-CA]. So it remains to verify that $\rho$ is monotone. Suppose $X\le Y$, and define $a_1=\ES_p(X)$, $b_1=\E[X]$, $a_2=\ES_p(Y)$ and $b_2=\E[Y]$. Obviously, we have $a_1\le a_2$, $b_1\le b_2$ and
$$
\rho(X)=g(a_1-b_1)+b_1,~~~\rho(Y)=g(a_2-b_2)+b_2.
$$
If $a_2-a_1\ge b_2-b_1$, we have
\begin{align*}
\rho(X)&\le \rho(X)+(b_2-b_1)\\
&=g(a_1-b_1)+b_2\\
&=g((a_1+b_2-b_1)-b_2)+b_2\\
&\le g(a_2-b_2)+b_2=\rho(Y),
\end{align*}
where the second inequality follows from the increasing monotonicity of $g$.
If $a_2-a_1< b_2-b_1$, we have
\begin{align*}
\rho(X)&\le \rho(X)+(a_2-a_1)\\
&=g(a_2-(b_1+a_2-a_1))+(b_1+a_2-a_1)\\
&\le \rho(a_2-b_2)+b_2=\rho(Y),
\end{align*}
where the second inequality follows from the 1-Lipschitz condition of $g$. Hence, we complete the proof of sufficiency. For the other direction, it follows from the results in Theorem \ref{th-p-CA} that $\rho$ has the form $f(\ES_p,\E)$ for some bivariate function $f$.
Define a function $g: [0,\infty)\to  \R$ such that $g(x)=f(x,0)$ for $x\ge 0$. It is clear that $g(0)=f(0,0)=\rho(0)=0$.
Note that $f(\cdot, y): [y,\infty)\to \R$ is increasing for all $y\in\R$ (see Theorem \ref{th-p-CA}). It follows that $g$ is increasing.
Using [TI], we obtain
\begin{align*}
\rho(X)&=\rho(X-\E[X])+\E[X]\\
&=f(\ES_p(X)-\E[X],0)+\E[X]\\
&=g(\ES_p(X)-\E[X])+\E[X].
\end{align*}
Finally, applying Theorem \ref{th-p-CA} (ii), we know that the function $f(x,\cdot): (-\infty,x]\to\R$ is increasing for all $x\in\R$. Hence, we have
$$
g(x-y)+y=f(x,y)\le f(x,y')=g(x-y')+y'~~{\rm for~all}~y<y'\le x,
$$
which implies that  $g$ is 1-Lipschitz. Hence, we complete the proof.
\end{proof}

The risk measure $\rho$ with form \eqref{eq-p}
is the sum of the mean and $g(\ES_p-\E)$.
Note that  $\ES_p-\E$ is both a generalized deviation measure  according to   \cite{RUZ06}
and  a coherent  measure of variability according to \cite{FWZ17}.
Hence, $g(\ES_p-\E)$ is a transformed deviation or variability  measure, and a monetary risk measure satisfying [$p$-CA] can be seen as a mean-deviation functional.


We continue to  characterize the classes of  convex,\footnote {A convex risk measure is a monetary risk measure which also satisfies \emph{convexity}: $\rho(\lambda X+(1-\lambda)Y)\le \lambda\rho(X)+(1-\lambda)\rho(Y)$ for all $\lambda\in[0,1]$.} coherent,\footnote{A coherent risk measure is a convex risk measure which  also satisfies   {\em positive homogeneity}: $\rho(\lambda X)=\lambda \rho(X)$ for all $\lambda\in(0,\infty)$ and $X\in\X$.}  and  consistent risk measures that satisfy [$p$-CA].  These three classes of risk measures  are all monetary risk measure,  and thus they can  be  represented as  the form  in Theorem \ref{th-p}.
Note that for $\rho$ satisfying [$p$-CA], there is a one-to-one correspondence between $\rho$ and $g$ in \eqref{eq-p}, and hence the above classes can be identified based on properties of $g$.
The gap between convex risk measure and consistent risk measure is established clearly in  the following  proposition. In particular, convexity of   $g$ is equivalent to convexity  of $\rho$.


\begin{proposition}\label{prop:nonconvex}
Let $p\in (0,1)$ and $\rho$ be a risk measure on $L^\infty$ satisfying {\rm[}$p$-{\rm CA}{\rm ]} and $\rho(0)=0$.
\begin{enumerate}[(i)]
\item  $\rho$ is a consistent risk measure  if and only if $\rho(X)=g(\ES_p(X)-\E[X])+\E[X]$ for some increasing  and  $1$-Lipschitz  function $g: [0,\infty)\to\R$ with $g(0)=0$.
\item
 $\rho$ is a convex risk measure if and only if $\rho(X)=g(\ES_p(X)-\E[X])+\E[X]$ for some increasing, convex and  $1$-Lipschitz  function $g: [0,\infty)\to\R$ with $g(0)=0$.

\item  $\rho$ is a coherent risk measure if and only if $\rho(X)=\alpha \ES_p(X)+(1-\alpha)\E[X]$ for some $\alpha\in[0,1]$.
\end{enumerate}
\end{proposition}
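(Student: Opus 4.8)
The plan is to leverage Theorem \ref{th-p} throughout, which already tells us that any risk measure $\rho$ on $L^\infty$ satisfying [M], [TI], [$p$-CA] and $\rho(0)=0$ has the representation $\rho(X)=g(\ES_p(X)-\E[X])+\E[X]$ for an increasing, $1$-Lipschitz $g:[0,\infty)\to\R$ with $g(0)=0$, and that such $\rho$ is automatically consistent. Part (i) is then essentially a restatement of Theorem \ref{th-p}: a consistent risk measure is in particular monetary (satisfies [M] and [TI]), so the representation applies; conversely, any $\rho$ of that form is consistent by the last line of Theorem \ref{th-p}. The only thing to check is that [$p$-CA] together with the stated form already forces [TI] and [M], but [TI] is immediate from the form and [M] was verified in the sufficiency part of Theorem \ref{th-p}, so (i) requires no new work.

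For (ii), I would argue that, given the representation $\rho(X)=g(\ES_p(X)-\E[X])+\E[X]$, convexity of $\rho$ is equivalent to convexity of $g$. The ``if'' direction: write $D(X)=\ES_p(X)-\E[X]$, which is a convex functional (it is a generalized deviation measure, being a difference of the convex $\ES_p$ and the linear $\E$); then $X\mapsto g(D(X))$ is convex as the composition of a convex \emph{and increasing} $g$ with a convex functional, and adding the linear $\E[X]$ preserves convexity. The ``only if'' direction: I would recover $g$ from $\rho$ by evaluating on a convenient parametric family. Concretely, for each $t\ge 0$ pick a random variable $X_t$ with $\E[X_t]=0$ and $\ES_p(X_t)=t$ (e.g.\ a two-valued random variable of the form used repeatedly in the proof of Theorem \ref{th-p-CA}), so that $\rho(X_t)=g(t)$; then realize $X_t$ and $X_s$ \emph{comonotonically} on the common probability space, so that $\E[\lambda X_t+(1-\lambda)X_s]=0$ and $\ES_p(\lambda X_t+(1-\lambda)X_s)=\lambda t+(1-\lambda)s$ by comonotone additivity of $\ES_p$, giving $\rho(\lambda X_t+(1-\lambda)X_s)=g(\lambda t+(1-\lambda)s)$; convexity of $\rho$ then yields $g(\lambda t+(1-\lambda)s)\le \lambda g(t)+(1-\lambda)g(s)$. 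The $1$-Lipschitz and increasing properties come for free from Theorem \ref{th-p}.

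For (iii), I would combine (ii) with positive homogeneity. A coherent $\rho$ is convex, so by (ii), $\rho(X)=g(\ES_p(X)-\E[X])+\E[X]$ with $g$ increasing, convex, $1$-Lipschitz, $g(0)=0$. Positive homogeneity of $\rho$ forces $g$ to be positively homogeneous: taking $X_t$ as above with $\ES_p(X_t)-\E[X_t]=t$, we get $\rho(\lambda X_t)=g(\lambda t)$ and also $\lambda\rho(X_t)=\lambda g(t)$, so $g(\lambda t)=\lambda g(t)$ for all $\lambda, t>0$, whence $g(t)=\alpha t$ with $\alpha=g(1)$. The $1$-Lipschitz condition gives $0\le\alpha\le 1$ (and increasingness gives $\alpha\ge 0$). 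Substituting back, $\rho(X)=\alpha(\ES_p(X)-\E[X])+\E[X]=\alpha\ES_p(X)+(1-\alpha)\E[X]$. The converse is immediate since $\ES_p$ and $\E$ are both coherent and the class of coherent risk measures is closed under convex combinations.

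The main obstacle is the ``only if'' direction of (ii): one must produce, on the fixed atomless space, a sufficiently rich family of random variables realizing prescribed pairs $(\ES_p,\E)$ \emph{and} coupled comonotonically (or at least in a way that makes the convex combination's $\ES_p$ and $\E$ computable) so that convexity of $\rho$ transfers cleanly to convexity of $g$ on all of $[0,\infty)$; the two-point constructions already deployed in the proof of Theorem \ref{th-p-CA} should suffice, but care is needed to ensure $\E$ stays fixed under the convex combination, which is why centering the $X_t$ at mean zero and using comonotone additivity of $\ES_p$ is the clean route.
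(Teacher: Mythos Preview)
Your proposal is correct and follows essentially the same approach as the paper. The only cosmetic difference is in the necessity direction of (ii): the paper argues by contrapositive, taking a $p$-concentrated pair $(X,Y)$ with $\ES_p(X)-\E[X]=x$ and $\ES_p(Y)-\E[Y]=y$ and using additivity of $\ES_p$ on $p$-concentrated risks to show nonconvex $g$ gives nonconvex $\rho$, whereas you use comonotone mean-zero two-point variables $X_t$; since comonotonicity implies $p$-concentration and both yield $\ES_p(\lambda X+(1-\lambda)Y)=\lambda\ES_p(X)+(1-\lambda)\ES_p(Y)$, the arguments are interchangeable.
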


\begin{proof}
(i) is  implied by Theorem \ref{th-p}.
To see (ii), applying Theorem \ref{th-p}, it is sufficient to prove that
convexity   of the function $g$ in \eqref{eq-p} is equivalent to convexity  of $\rho$.  Note that $g$ is an increasing function.  If  $g$ is convex, then  $\rho$ is a convex risk measure because expectation is linear and $\ES_p$ is a convex risk measure. If  $g$ is nonconvex, then there exist $0\le x<y$ and $\lambda\in(0,1)$ such that $g(\lambda x+(1-\lambda)y)>\lambda g(x)+(1-\lambda)g(y)$. Suppose ($X,Y$) is $p$-concentrated, and satisfies $\ES_p(X)-\E[X]=x$ and $\ES_p(Y)-\E[Y]=y$. Thus, we have
\begin{align*}
\rho(\lambda X+(1-\lambda)Y)&=g(\ES_p(\lambda X+(1-\lambda)Y)-\E[\lambda X+(1-\lambda)Y])+\E[\lambda X+(1-\lambda)Y]\\
&=g(\lambda(\ES_p(X)-\E[X])+(1-\lambda)(\ES_p(Y)-\E[Y]))+\lambda\E[X]+(1-\lambda)\E[Y]\\
&=g(\lambda x+(1-\lambda)y)+\lambda\E[X]+(1-\lambda)\E[Y]\\
&>\lambda g(x)+(1-\lambda)g(y)+\lambda\E[X]+(1-\lambda)\E[Y]\\
&=\lambda (g(\ES_p(X)-\E[X])+\E[X])+(1-\lambda)(g(\ES_p(Y)-\E[Y])+\E[Y])\\
&=\lambda\rho(X)+(1-\lambda)\rho(Y),
\end{align*}
which implies $\rho$ is nonconvex.
(iii) Sufficiency is straightforward. To show necessity,    let  $X$ be such that  $\E[X]=0$ and $\ES_p(X)=x>0$. By Theorem \ref{th-p}, coherence of $\rho$ implies that for all $\lambda>0$,
$$
g(\lambda x)=\rho(\lambda X)=\lambda\rho(X)=\lambda g(x).
$$
This means that  $g$ is  linear  on   $(0,\infty)$. Noting that $g$ is 1-Lipschitz  with $g(0)=0$,  we have $g(x)=\alpha x$ for some $\alpha\in[0,1]$. Hence, we complete the proof of (iii).
\end{proof}

 Since SSD-consistency is strictly weaker than convexity for a law-invariant risk measure, the class of consistent risk measures generalizes that of law-invariant convex risk measures. However, explicit formulas for  nonconvex consistent risk measures are rare in the literature; indeed, all examples in \cite{MW20} involve taking an infimum over convex risk measures. Proposition \ref{prop:nonconvex} leads to many examples of  consistent risk measures with explicit formulas which are outside the classic framework of convex risk measures.

\subsection{A new characterization of the Expected Shortfall}
 Next, we add lower semicontinuity  [{\rm P}] to the requirements in Theorem \ref{th-p} and obtain a new characterization of ES. Remarkably, although Theorem \ref{th-p} allows for many choices of risk measures satisfying [$p$-CA],  lower semicontinuity is enough to  force the function  $g$ in \eqref{eq-p}  to collapse to the identity. Hence, for this characterization of ES, we do not need to assume coherence or convexity.
\begin{theorem}\label{th-ES}
Let $p\in (0,1)$ and $\rho$ be a risk measure on $L^\infty$.
Then $\rho$ satisfies {\rm [M], [TI], [P], [$p$-CA]} and $\rho(0)=0$ if and only if it is ${\rm ES}_p$.
\end{theorem}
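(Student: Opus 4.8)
The plan is to read off the form of $\rho$ from Theorem~\ref{th-p} and then use [P] to kill all freedom in the function $g$. For the ``if'' direction, $\ES_p$ is monotone, translation invariant, satisfies $\ES_p(0)=0$, satisfies [$p$-CA] (Theorem~5 of \cite{WZ21}, as already noted after Proposition~\ref{prop:1}), and is lower semicontinuous with respect to pointwise (indeed a.s.) convergence, which is standard and is part of the characterization in \cite{WZ21}; so all stated axioms hold. For the ``only if'' direction, Theorem~\ref{th-p} gives $\rho(X)=g(\ES_p(X)-\E[X])+\E[X]$ for some increasing, $1$-Lipschitz $g\colon[0,\infty)\to\R$ with $g(0)=0$. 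The $1$-Lipschitz property together with $g(0)=0$ already forces $g(t)\le t$ for all $t\ge 0$, so it remains only to prove the reverse inequality $g(t)\ge t$.

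To obtain $g(t)\ge t$ for fixed $t>0$, I would construct a sequence converging to the constant $t$ whose mean is artificially held at $0$ by hiding probability mass deep in the lower tail. Concretely, pick events $B_n$ with $\p(B_n)=p\,2^{-n}$ and $\bigcap_n B_n$ null (e.g.\ a decreasing sequence), set $c_n=-t\bigl(2^{n}/p-1\bigr)<0<t$, and define $X_n=t\,\id_{B_n^c}+c_n\,\id_{B_n}$. Then $\E[X_n]=0$ by the choice of $c_n$; since $\p(B_n)<p$ and $c_n<t$, the atom at $c_n$ lies entirely below the $p$-quantile, so $F_{X_n}^{-1}(s)=t$ for all $s\in(p,1)$ and hence $\ES_p(X_n)=t$; and $X_n\to t$ a.s.\ because $\p(B_n)\to 0$. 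Therefore $\rho(X_n)=g(t-0)+0=g(t)$ for every $n$, while $\rho(t)=t$ by [TI] and $\rho(0)=0$. Applying [P] in its a.s.\ form (admissible by the remark following the list of axioms) gives $g(t)=\liminf_{n\to\infty}\rho(X_n)\ge\rho(t)=t$. Hence $g(t)=t$ for all $t\ge 0$, i.e.\ $g$ is the identity and $\rho=\ES_p$.

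The main obstacle is designing this approximating sequence correctly: one must push $\ES_p-\E$ up to the prescribed value $t$ while keeping $\ES_p$ itself frozen, and the only lever available is the lower tail, since moving mass strictly below the $p$-quantile changes the mean but is invisible to $\ES_p$. Housing that mass on a set of probability $p\,2^{-n}\downarrow 0$ simultaneously guarantees that it stays below the $p$-quantile for every $n$ (so $\ES_p(X_n)=t$) and that $X_n$ degenerates to a constant in the limit (so $\ES_p-\E$ collapses to $0$); lower semicontinuity then prohibits $g$ from dipping below the identity. Once Theorem~\ref{th-p} is in hand, everything else is bookkeeping.
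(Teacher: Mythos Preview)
Your proof is correct and follows essentially the same route as the paper's: both construct a two-point sequence $X_n$ with vanishing lower mass so that $(\ES_p(X_n),\E[X_n])$ stays fixed while $X_n\to\text{const}$ a.s., and then invoke [P] to pin down the function. The only cosmetic difference is that you start from Theorem~\ref{th-p} (so [TI] is already absorbed and you work with the one-variable $g$), whereas the paper starts from Theorem~\ref{th-p-CA}, shows $f(x,y)=f(x,x)$, and applies [TI] at the end.
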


\begin{proof}
Sufficiency follows from Proposition 1 and Theorem 5 of \cite{WZ21}. To see necessity, we first apply the result in Theorem \ref{th-p-CA} that $\rho$ has the form $f({\rm ES}_p,\E)$, and the function $y\mapsto f(x,y)$   is increasing on $(-\infty,x]$ for all $x\in\R$.
Next, we will verify that the value of $f$ is independent of its second argument.  On the one hand,  we have $f(x,x)\ge f(x,y)$ for all $x\ge y$. On the other hand, define a sequence of random variables $\{X_n\}_{n\in\N}$ such that $\p(X_n=x)=1-1/n$, $\p(X_n=x-n(x-y))=1/n$ and $X_n\to x$ a.s.. By the property [P], we have
$$
f(x,y)=\liminf_{n\to\infty} f(\ES_p(X_n),\E[X_n])\ge f(x,x).
$$
Therefore, we conclude that $f(x,x)=f(x,y)$ for all $x\ge y$, and this means $\rho(X)=g(\ES_p(X))$ for some function $g$. Finally, using [TI] and $\rho(0)=0$, one can  conclude that $g$ is the identity.\end{proof}


We can equivalently  express Theorem \ref{th-ES}   in terms of the acceptance set as in the next proposition. A proof is straightforward from the definition of an acceptance set.

\begin{proposition}\label{th-acceptance}
Let $p\in(0,1)$.
An acceptance set $\mathcal A$ satisfies \begin{enumerate}[(i)]
\item $(X, Y)$ is  $p$-concentrated  and
$X+Y\in\mathcal A$ $\Longrightarrow$ $X'+Y'\in\mathcal A$   for all $X', Y'$  with $X'\laweq X$, $Y'\laweq Y$,
\item $X_n\in\mathcal A$ for each $n=1,2,\dots$ and $X_n\to X$ pointwise $\Longrightarrow$ $X\in\mathcal A$, and
\item $\sup\{c\in\R: c\in\mathcal A\}=0$,
\end{enumerate}
 if and only if  $\mathcal A$ is the acceptance set of $\ES_p$.
\end{proposition}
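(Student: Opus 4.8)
The plan is to translate each of the three conditions on the acceptance set $\mathcal A$ into the matching property of the monetary risk measure that generates it, and then to invoke Theorem~\ref{th-ES}. I would start by recalling that, by definition, an acceptance set $\mathcal A$ is $\{X\in L^\infty:\rho(X)\le 0\}$ for a monetary risk measure $\rho$, that $\rho$ is recovered from $\mathcal A$ via $\rho(X)=\inf\{m\in\R:X-m\in\mathcal A\}$ (so $\rho$ automatically satisfies [M] and [TI]), and that the acceptance set of $\ES_p$ is exactly the one generated by the monetary risk measure $\ES_p$. It then suffices to prove three equivalences---(i) $\Leftrightarrow$ [$p$-CA], (ii) $\Leftrightarrow$ [P], (iii) $\Leftrightarrow$ $\rho(0)=0$---after which Theorem~\ref{th-ES} immediately gives that (i)--(iii) together hold iff $\rho=\ES_p$, i.e.\ iff $\mathcal A$ is the acceptance set of $\ES_p$.

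The equivalence (iii) $\Leftrightarrow$ $\rho(0)=0$ I would get from [TI]: a constant $c$ lies in $\mathcal A$ iff $\rho(0)+c\le 0$, so $\sup\{c\in\R:c\in\mathcal A\}=-\rho(0)$. For (ii) $\Leftrightarrow$ [P], one direction is immediate: if $\rho$ has [P] and $X_n\in\mathcal A$ with $X_n\to X$ pointwise then $\rho(X)\le\liminf_n\rho(X_n)\le 0$. For the converse I would, given $X_n\to X$ pointwise, set $c=\liminf_n\rho(X_n)$, note that along a suitable subsequence $X_n-c-\epsilon\in\mathcal A$ for each fixed $\epsilon>0$, apply condition (ii) to the pointwise limit, and conclude $\rho(X)\le c+\epsilon$ by [TI]; letting $\epsilon\downarrow 0$ yields [P] (keeping all limits inside $L^\infty$, consistent with $\mathcal A\subseteq L^\infty$ and $\rho$ being real-valued there).

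The only translation needing a real argument is (i) $\Leftrightarrow$ [$p$-CA]. For [$p$-CA] $\Rightarrow$ (i): if $(X,Y)$ is $p$-concentrated and $X+Y\in\mathcal A$, then for $X'\laweq X$, $Y'\laweq Y$, applying [$p$-CA] with the arbitrary pair $(X',Y')$ against the $p$-concentrated pair $(X,Y)$ gives $\rho(X'+Y')\le\rho(X+Y)\le 0$, so $X'+Y'\in\mathcal A$. For (i) $\Rightarrow$ [$p$-CA]: given an arbitrary $(X,Y)$ and a $p$-concentrated $(X',Y')$ with $X\laweq X'$, $Y\laweq Y'$, I would set $m=\rho(X'+Y')$, observe that shifting one coordinate by the constant $m$ does not change its tail events so that $(X'-m,Y')$ is again $p$-concentrated with $(X'-m)+Y'=(X'+Y')-m\in\mathcal A$, and then apply condition (i) with $X-m\laweq X'-m$ and $Y\laweq Y'$ to obtain $(X-m)+Y\in\mathcal A$, i.e.\ $\rho(X+Y)\le m=\rho(X'+Y')$.

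I do not expect a genuine obstacle here; as the remark after the statement says, this is essentially a dictionary translation. The only points that need care are the observation that translating a single coordinate by a constant preserves the $p$-tail event (hence $p$-concentration), which is what makes the ``(i) $\Rightarrow$ [$p$-CA]'' direction go through, and the bookkeeping that the risk measure generated by $\mathcal A$ is genuinely monetary so that Theorem~\ref{th-ES} applies verbatim.
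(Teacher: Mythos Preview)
Your proposal is correct and follows exactly the approach the paper intends: the paper gives no detailed proof, saying only that it ``is straightforward from the definition of an acceptance set,'' i.e., translate conditions (i)--(iii) into the properties [$p$-CA], [P], and $\rho(0)=0$ of the generating monetary risk measure and then invoke Theorem~\ref{th-ES}. Your handling of the one nontrivial step, namely (i) $\Rightarrow$ [$p$-CA] via the constant shift $X'\mapsto X'-m$ (which preserves the $p$-tail event and hence $p$-concentration), is the right observation and is precisely what makes the dictionary go through.
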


\section{Generalization to larger spaces}\label{sec:5}
In this section, we  generalize the  characterization results in Sections \ref{sec:3} and \ref{sec:4} to larger $L^q$ spaces than $L^\infty$. The risk measure $\rho:L^q\to \R$ will be assumed to take real values.

\subsection{Generalization to $L^q$ for $q\ge 1$}
We endow the natural norm on $L^q$, $q\in[ 1,\infty)$, i.e., $\|X\|_q=(\E[|X|^q])^{1/q}$ for $X\in L^q$, and   continuity is defined with respect to $\|\cdot\|_q$. Furthermore, we recall the notation $\H$ as the half-space $\{(x,y)\in\R^2: x\ge y\}$.


\begin{proposition}\label{th-large-pCA}
Let $p\in (0,1)$, $q\ge 1$ and
$\rho:L^q\to\R$ be a continuous risk measure. Then,
\begin{itemize}
\item[(i)]
$\rho$ satisfies {\rm [$p$-CA]} if and only if it has the form
$f({\rm{ES}}_p,\E)$, where $f:\H\to \R$ is a continuous bivariate function which is increasing in its first argument.
\item[(ii)]
$\rho$ satisfies {\rm [M]} and {\rm [$p$-CA]} if and only if it has the form
$f({\rm{ES}}_p,\E)$, where $f:\H\to \R$ is a continuous bivariate function which is increasing in both arguments.
\end{itemize}
\end{proposition}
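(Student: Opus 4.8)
The plan is to derive Proposition~\ref{th-large-pCA} from Theorem~\ref{th-p-CA} by restricting $\rho$ to $L^\infty$ and then gluing the pieces back together using the continuity hypothesis. For sufficiency, suppose $\rho=f(\ES_p,\E)$ with $f\colon\H\to\R$ continuous and increasing in its first argument (resp.\ in both arguments). Since $\ES_p$ is a finite coherent risk measure on $L^1$, it is Lipschitz with respect to $\|\cdot\|_1$, hence with respect to $\|\cdot\|_q$ on $L^q$; together with the $1$-Lipschitz continuity of $\E$, the map $X\mapsto(\ES_p(X),\E[X])$ is continuous from $L^q$ into $\H$, so $\rho$ is real-valued and continuous. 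Property [$p$-CA] follows exactly as in the sufficiency part of Theorem~\ref{th-p-CA}: if $(X',Y')$ is $p$-concentrated with $X\laweq X'$ and $Y\laweq Y'$, then Theorem~5 of \cite{WZ21} gives $\ES_p(X+Y)\le\ES_p(X'+Y')$ and $\E[X+Y]=\E[X'+Y']$, so $\rho(X+Y)\le\rho(X'+Y')$ by monotonicity of $f$ in its first argument; and [M] follows from monotonicity of $\ES_p$, $\E$ and of $f$ in both arguments.

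For necessity, restrict $\rho$ to $L^\infty\subseteq L^q$. Both [$p$-CA] and [M] are inherited by this restriction, so Theorem~\ref{th-p-CA} yields a function $f\colon\H\to\R$, increasing in its first argument (resp.\ in both), with $\rho(X)=f(\ES_p(X),\E[X])$ for every $X\in L^\infty$; it is $\R$-valued because $\rho$ is. Every point $(x,y)\in\H$ is attained by a bounded random variable: fixing an event $A$ with $\p(A)=1-p$ and setting
$$Z_{x,y}=x\,\id_A+\frac{y-(1-p)x}{p}\,\id_{A^c},$$
the lower atom $p^{-1}(y-(1-p)x)$ is at most $x$ precisely because $x\ge y$, whence $\ES_p(Z_{x,y})=x$ and $\E[Z_{x,y}]=y$. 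Moreover, if $(x_n,y_n)\to(x,y)$ in $\H$ then $Z_{x_n,y_n}\to Z_{x,y}$ uniformly, hence in $L^q$, so continuity of $\rho$ gives $f(x_n,y_n)=\rho(Z_{x_n,y_n})\to\rho(Z_{x,y})=f(x,y)$; thus $f$ is continuous on $\H$.

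It remains to extend the representation from $L^\infty$ to $L^q$. For $X\in L^q$, the truncations $X_n=(X\wedge n)\vee(-n)$ lie in $L^\infty$ and converge to $X$ in $L^q$ by dominated convergence, so $\rho(X_n)\to\rho(X)$, while $\ES_p(X_n)\to\ES_p(X)$ and $\E[X_n]\to\E[X]$ by the Lipschitz bounds above. Since $(\ES_p(X_n),\E[X_n])\in\H$ converges to $(\ES_p(X),\E[X])\in\H$ and $f$ is continuous, $\rho(X)=\lim_n f(\ES_p(X_n),\E[X_n])=f(\ES_p(X),\E[X])$, which proves~(i); part~(ii) is identical, carrying the monotonicity of $f$ in its second argument along from Theorem~\ref{th-p-CA}(ii). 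I do not expect a serious obstacle here: the proposition is essentially a soft corollary of Theorem~\ref{th-p-CA} combined with the density of $L^\infty$ in $L^q$ and the continuity of $\ES_p$ and $\E$. The only points requiring care are that the explicit two-point family $Z_{x,y}$ is genuinely needed to transfer continuity of $\rho$ into continuity of $f$, and that the truncation step relies on $\ES_p$ being continuous on all of $L^q$, which is why the coherence (and hence Lipschitz) property of $\ES_p$ is invoked.
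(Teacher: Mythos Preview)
Your proof is correct and follows essentially the same route as the paper: restrict to $L^\infty$, invoke Theorem~\ref{th-p-CA}, use the two-point family $Z_{x,y}$ to transfer continuity of $\rho$ to continuity of $f$, and then extend to $L^q$ by density. You are in fact more explicit than the paper on two points the authors gloss over---the sufficiency direction (which the paper dismisses as ``trivial'') and the actual passage from the representation on $L^\infty$ to all of $L^q$ via truncation and continuity of $\ES_p$ and $\E$---so your argument is, if anything, cleaner.
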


\begin{proof}
Sufficiency  in both    (i) and (ii) is trivial. To see necessity,
noting that for any $X\in L^q$, there exists a sequence $\{X_n\}_{n\in\N}\subseteq L^\infty$ converges to $X$ with respect to the norm $\|\cdot\|_q$. By the continuity of $\rho$,   the statements in Theorem \ref{th-p-CA} are all valid. Thus, it remains to prove that $f$ is continuous on $\H$. For $(x_0,y_0)\in\H$, let $\{(x_n,y_n)\}_{n\in\N}\subseteq \H$ be a sequence converges to $(x_0,y_0)$. Let $A\in\mathcal F$ such that $\p(A)=p$. Define
a sequence of random variables
$$
X_n(\omega)=\frac{y_n-(1-p)x_n}{p}~{\rm for~}\omega\in A,~~~\mbox{and}~~~X_n(\omega)=x_n~{\rm for~}\omega\in A^c,
$$
 and let $$
X(\omega)=\frac{y_0-(1-p)x_0}{p}~{\rm for~}\omega\in A,~~~\mbox{and}~~~X(\omega)=x_0~{\rm for~}\omega\in A^c.
$$
Obviously, $\ES_p(X_n)=x_n$, $\E[X_n]=y_n$ and $X_n\to X$ in $L^q$ with $\ES_p(X)=x_0$, $\E[X]=y_0$.
Hence, we have
$$
f(x_n,y_n)=f({\rm{ES}}_p(X_n),\E[X_n])=\rho(X_n)\to\rho(X)=f({\rm{ES}}_p(X),\E[X])=f(x_0,y_0).
$$
This completes the proof.
\end{proof}


Similarly, Theorems \ref{th-p} and \ref{th-ES} can be generalized to $L^q$ for $q\ge 1$.
\begin{proposition}\label{th-large-p}
Let $p\in (0,1)$, $q\ge 1$  and
$\rho:L^q\to\R$ be a continuous   risk measure. Then $\rho$
satisfies {\rm [M]}, {\rm [TI]}, {\rm [$p$-CA]} and $\rho(0)=0$ if and only if it has the form  $
\rho(X)=g(\ES_p(X)-\E[X])+\E[X]
 $
for some  increasing and $1$-Lipschitz function $g: [0,\infty)\to\R$  with $g(0)=0$.
In particular, such $\rho$ is a consistent risk measure.
\end{proposition}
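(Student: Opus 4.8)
The plan is to bootstrap from Proposition~\ref{th-large-pCA}, which already extends Theorem~\ref{th-p-CA} to $L^q$, and then reproduce the $L^\infty$ argument of Theorem~\ref{th-p} essentially verbatim, being careful only with the bookkeeping needed to stay inside $L^q$ and to respect the continuity hypothesis.

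For the ``if'' direction I would first check that a functional of the stated form is well defined and real valued on $L^q$: since $L^q\subseteq L^1$ for $q\ge 1$, both $\ES_p$ and $\E$ are finite, and the $1$-Lipschitz property with $g(0)=0$ gives $|g(t)|\le t$, so $|\rho(X)|\le (\ES_p(X)-\E[X])+|\E[X]|<\infty$. Continuity with respect to $\|\cdot\|_q$ follows because $\ES_p$ is $(1-p)^{-1}$-Lipschitz and $\E$ is $1$-Lipschitz with respect to $\|\cdot\|_1\le\|\cdot\|_q$, while $(x,y)\mapsto g(x-y)+y$ is Lipschitz continuous on $\H$. The axioms [M], [TI], [$p$-CA] and $\rho(0)=0$ then hold by exactly the computation in the sufficiency part of Theorem~\ref{th-p}, since the two-point random variables used there lie in $L^\infty\subseteq L^q$. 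Running the same monotonicity computation for $X\preceq_{\rm SSD}Y$ (which also forces $\ES_p(X)\le\ES_p(Y)$ and $\E[X]\le\E[Y]$) yields [SC], so $\rho$ is a consistent risk measure; alternatively one may note that [$p$-CA] implies [DC] and invoke the $L^q$ analogue of the result of \cite{MW20}.

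For the ``only if'' direction, continuity of $\rho$ lets me apply Proposition~\ref{th-large-pCA}(ii): under [M] and [$p$-CA] we have $\rho=f(\ES_p,\E)$ with $f:\H\to\R$ continuous and increasing in both arguments. From here the extraction of $g$ is identical to the proof of Theorem~\ref{th-p}. Set $g(x)=f(x,0)$ for $x\ge 0$; then $g(0)=f(0,0)=\rho(0)=0$, and $g$ is increasing because $f$ is increasing in its first argument. Using [TI], $\rho(X)=\rho(X-\E[X])+\E[X]=f(\ES_p(X)-\E[X],0)+\E[X]=g(\ES_p(X)-\E[X])+\E[X]$. Finally, monotonicity of $y\mapsto f(x,y)$ on $(-\infty,x]$ gives $g(x-y)+y\le g(x-y')+y'$ for $y<y'\le x$, i.e.\ $g(s)-g(s')\le s-s'$ for $s>s'\ge 0$; together with the monotonicity of $g$ this is the $1$-Lipschitz bound. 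All random variables entering these steps are bounded, hence in $L^q$, so nothing needs to be adapted.

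The only content genuinely beyond Proposition~\ref{th-large-pCA} is the finiteness-and-continuity bookkeeping in the ``if'' direction, and that is where I would be most attentive: one must confirm that $\ES_p$, which a priori only needs $L^1$, is continuous in the stronger $L^q$ norm, and that the composed functional never leaves $\R$. I do not expect a real obstacle here; the substantive work was done in extending Theorem~\ref{th-p-CA}, and the rest is a transcription of the $L^\infty$ proof with the ambient space enlarged.
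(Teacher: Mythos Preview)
Your proposal is correct and follows the paper's intended route: the paper does not give a separate proof of this proposition but simply remarks that Theorems~\ref{th-p} and~\ref{th-ES} generalize to $L^q$ for $q\ge 1$ ``similarly'' once Proposition~\ref{th-large-pCA} is in hand, and you have filled in precisely those details. Your explicit verification of finiteness and $L^q$-continuity in the ``if'' direction (via the Lipschitz bounds on $\ES_p$, $\E$, and $(x,y)\mapsto g(x-y)+y$) is a useful addition beyond what the paper writes down.
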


\begin{proposition}\label{th-large}
Let $p\in (0,1)$, $q\ge 1$  and
$\rho:L^q\to\R$ be a  continuous risk measure. Then $\rho$ satisfies {\rm [M], [TI], [P], [$p$-CA]} and $\rho(0)=0$ if and only if it is ${\rm ES}_p$.
\end{proposition}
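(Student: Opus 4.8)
\textbf{Proof plan for Proposition \ref{th-large}.} The plan is to follow the strategy of Theorem \ref{th-ES}, but to invoke lower semicontinuity [P] \emph{directly} on $L^q$ rather than trying to transport it through the $L^\infty$-density argument used for Propositions \ref{th-large-pCA} and \ref{th-large-p}: that density argument only transfers properties stable under $\|\cdot\|_q$-convergence, whereas [P] concerns pointwise convergence. First I would dispose of sufficiency. For $q\ge1$, $\ES_p$ is real-valued on $L^q$, it is $\|\cdot\|_1$-Lipschitz and hence $\|\cdot\|_q$-continuous, it satisfies [M], [TI] and $\ES_p(0)=0$, it satisfies [P] by Proposition 1 of \cite{WZ21}, and it satisfies [$p$-CA] by Theorem 5 of \cite{WZ21} (as already used in Theorem \ref{th-ES}).

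For necessity, I would start from Proposition \ref{th-large-p}: a continuous risk measure $\rho$ on $L^q$ satisfying [M], [TI], [$p$-CA] and $\rho(0)=0$ must have the form $\rho(X)=g(\ES_p(X)-\E[X])+\E[X]$ for some increasing, $1$-Lipschitz $g:[0,\infty)\to\R$ with $g(0)=0$. It then remains to use [P] to force $g$ to be the identity. Fix $x\ge y$. Since $(\Omega,\mathcal F,\p)$ is atomless, choose a decreasing sequence of events $A_n\in\mathcal F$ with $\p(A_n)=1/n$ and $\bigcap_n A_n=\emptyset$, and set $X_n=x$ on $A_n^c$ and $X_n=x-n(x-y)$ on $A_n$. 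Each $X_n$ is bounded, hence in $L^q$, and $X_n\to x$ pointwise a.s.; a direct computation gives $\E[X_n]=y$ for all $n$ and $\ES_p(X_n)=x$ for all $n$ large enough that $1/n<p$. Thus $\rho(X_n)=g(x-y)+y$ for such $n$, while $\rho(x)=g(0)+x=x$, so [P] yields $g(x-y)+y=\liminf_{n\to\infty}\rho(X_n)\ge\rho(x)=x$, i.e. $g(t)\ge t$ for every $t\ge0$. Combining this with $g(0)=0$ and the $1$-Lipschitz bound $g(t)=g(t)-g(0)\le t$ gives $g(t)=t$, hence $\rho=\ES_p$.

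The only delicate point — and the reason this is not an immediate corollary of Proposition \ref{th-large-p} — is that [P] is a pointwise-convergence hypothesis that cannot be inherited through the $L^\infty$-density reduction, so it must be exercised via an explicit pointwise-convergent sequence; one has to verify that this sequence both stays inside $L^q$ and, for $n$ large, reproduces the prescribed pair $(\ES_p(X_n),\E[X_n])=(x,y)$ exactly, so that the $\liminf$ in [P] collapses to the single value $g(x-y)+y$. Everything else is bookkeeping already carried out in the $L^\infty$ setting of Theorem \ref{th-ES}.
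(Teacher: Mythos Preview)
Your proposal is correct and follows essentially the same route the paper intends: the paper omits the proof of Proposition~\ref{th-large}, simply stating that Theorem~\ref{th-ES} generalizes to $L^q$, and the argument in Theorem~\ref{th-ES} is precisely the construction of the pointwise-convergent sequence $X_n$ (with $\p(X_n=x)=1-1/n$, $\p(X_n=x-n(x-y))=1/n$) to squeeze the representation via {\rm[P]}. Your only variation is cosmetic: you start from the one-variable form $g(\ES_p-\E)+\E$ of Proposition~\ref{th-large-p} and combine $g(t)\ge t$ (from {\rm[P]}) with $g(t)\le t$ (from the $1$-Lipschitz bound), whereas the paper's Theorem~\ref{th-ES} starts from the two-variable form $f(\ES_p,\E)$, shows $f(x,y)=f(x,x)$, and then uses {\rm[TI]} to conclude---the content is the same.
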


\subsection{Impossibility results on $L^q$ for $q\in [0,1)$}\label{subsec:5-2}
 In this section,  we let $q\in [0,1)$ and   consider the larger spaces $L^q\supset L^1$   as the domain of the risk measure $\rho$. It is shown in Theorem 2 of \cite{WZ21} that the only real-valued risk measure on $L^q$ satisfying   [M], [LI], [P] and  [NRC] is the constant risk measure $\rho=0$.
 A natural question arises: Is there a nonconstant risk measure $\rho: L^q\to\R$  satisfying [$p$-CA]?
We shall first see in the following example that [$p$-CA] on $L^q$ does not necessarily lead to a constant risk measure.
\begin{example}\label{ex-large}
Let $f(x,y)$ be a  bounded real function on $\H=\{(x,y)\in\R^2: x\ge y\}$ which is increasing  in both $x,y$ and $M>0$ be such that $|f|\le   M$. Define
\begin{align*}
\rho(X)=\begin{cases}
f(\ES_p(X),\E[X]),~~~&X\in L^1,\\
-M,~~~&\E[X_-]=\infty,~\E[X_+]<\infty,\\
M,~~~&\E[X_+]=\infty,
\end{cases}
\end{align*}
where $X_{+}=\max \{X, 0\}$ and $X_{-}=\max \{-X, 0\}$. One can verify that $\rho$ satisfies [M] and [$p$-CA].
\end{example}

As illustrated by Example \ref{ex-large},
in contrast to [NRC], we can construct a class of nontrivial risk measures  bounded on $L^0$
that satisfies [$p$-CA].
Nevertheless,
the following proposition illustrates that it is pointless to consider monotone risk measures $\rho: L^q\to\R$  satisfying [$p$-CA] if   $\rho$ is unbounded on the set of constants. As a consequence, we conclude that the domain $L^1$ is the most natural, and essentially the largest, choice for any real-valued risk measures  satisfying  {\rm [M], [TI] and [$p$-CA]}.

\begin{proposition}\label{prop-large}
Let $p\in(0,1)$ and $q\in [0, 1)$. There is no such $\rho: L^q\to\R$  that satisfies {\rm [M], [$p$-CA]} and $\lim_{c\to\infty}\rho(c)=\infty$.
\end{proposition}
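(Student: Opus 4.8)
The plan is to argue by contradiction: assuming such a $\rho$ exists, I will produce a single $X\in L^q$ with $\rho(X)=\infty$, contradicting real-valuedness. First note that [$p$-CA] implies [LI] (as in the proof of Proposition~\ref{prop:1}), so $\rho$ is law invariant, and combined with [M] it is monotone with respect to almost-sure dominance. Since $q<1$, fix $X\ge 0$ in $L^q\setminus L^1$ with $\E[X]=\infty$; for instance, $X$ with $\p(X>t)=(1+t)^{-r}$ and $q<r<1$ lies in $L^q$ and has $\E[X]=\infty$. As $\rho\colon L^q\to\R$ is real-valued, $\rho(X)<\infty$, so it suffices to show $\rho(c)\le\rho(X)$ for every constant $c>0$: this contradicts $\lim_{c\to\infty}\rho(c)=\infty$.

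Fix $c>0$ and slice $X$ into bounded comonotone layers, $Z_i=(X-(i-1)c)_{+}\wedge c$ for $i=1,2,\dots$, so that each $Z_i$ is an increasing function of $X$ (hence the $Z_i$ are comonotone), $0\le Z_i\le c$, and $Z_1+\dots+Z_n=X\wedge(nc)$ for every $n$. Since $\{Z_i=c\}=\{X\ge ic\}$, monotonicity of $t\mapsto\p(X\ge t)$ gives $\sum_{i\ge 1}\p(Z_i=c)\ge \frac1c\int_{c}^{\infty}\p(X\ge t)\,\d t=\infty$ (because $\E[X]=\infty$), so we may choose $n=n(c)$ with $\sum_{i=1}^{n}\p(Z_i=c)\ge 1$. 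By atomlessness there is a re-coupling $(\tilde Z_1,\dots,\tilde Z_n)$ with $\tilde Z_i\laweq Z_i$ and $\bigcup_{i=1}^n\{\tilde Z_i=c\}=\Omega$; hence $\tilde Z_1+\dots+\tilde Z_n\ge\max_i\tilde Z_i=c$ almost surely, so $\rho(c)\le\rho(\tilde Z_1+\dots+\tilde Z_n)$ by [M].

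The crux—and the main obstacle—is the comparison $\rho(\tilde Z_1+\dots+\tilde Z_n)\le\rho(Z_1+\dots+Z_n)$, which cannot be obtained by iterating the pairwise property [$p$-CA] because the marginal laws of the intermediate partial sums are not controlled. Instead, all the $Z_i,\tilde Z_i$ and their sums are bounded, so I restrict $\rho$ to $L^\infty$ (which inherits [M] and [$p$-CA]) and invoke Theorem~\ref{th-p-CA}: $\rho|_{L^\infty}=f(\ES_p,\E)$ with $f\colon\H\to(-\infty,\infty]$ increasing in its first argument. Then $\E[\sum_i\tilde Z_i]=\sum_i\E[Z_i]=\E[\sum_i Z_i]$, while subadditivity of $\ES_p$ and its additivity on comonotone vectors give $\ES_p(\sum_i\tilde Z_i)\le\sum_i\ES_p(Z_i)=\ES_p(\sum_i Z_i)$; monotonicity of $f$ in the first coordinate yields the comparison. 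Finally $Z_1+\dots+Z_n=X\wedge(nc)\le X$, so [M] gives $\rho(c)\le\rho(\tilde Z_1+\dots+\tilde Z_n)\le\rho(Z_1+\dots+Z_n)\le\rho(X)$ for every $c>0$, contradicting $\rho(X)<\infty=\lim_{c\to\infty}\rho(c)$. The point of the argument is that the $L^\infty$-restriction together with Theorem~\ref{th-p-CA} upgrades [$p$-CA] into genuine monotonicity in $\ES_p$, which is precisely what makes the multi-layer bound go through; this step also makes transparent why $q\ge 1$ is exempt, since then $\E[X]<\infty$ and the layer sum $\sum_{i}\p(X\ge ic)$ fails to reach $1$ for large $c$.
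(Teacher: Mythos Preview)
Your argument is correct, but it is considerably more elaborate than the paper's. Both proofs hinge on restricting $\rho$ to $L^\infty$ and invoking Theorem~\ref{th-p-CA}, but the paper exploits this representation far more directly: with $X_n:=X\wedge n$ and $f$ increasing in both arguments (Theorem~\ref{th-p-CA}(ii), available since [M] is assumed), one has
\[
\rho(X)\;\ge\;\rho(X_n)\;=\;f\bigl(\ES_p(X_n),\E[X_n]\bigr)\;\ge\;f\bigl(\E[X_n],\E[X_n]\bigr)\;=\;\rho(\E[X_n])\;\longrightarrow\;\infty,
\]
since $\E[X_n]\to\infty$. No layer decomposition, re-coupling, or comonotone additivity of $\ES_p$ is needed. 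Your approach, by contrast, slices $X$ into bounded layers, recouples them so that their sum dominates the constant $c$, and then uses subadditivity and comonotone additivity of $\ES_p$ together with first-argument monotonicity of $f$ to push the comparison through. This works, and has the minor merit of using only part~(i) of Theorem~\ref{th-p-CA} (first-argument monotonicity of $f$) rather than part~(ii); but since [M] is among the hypotheses, part~(ii) is available anyway, and the paper's route is the cleaner one.
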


\begin{proof}
Assume that such $\rho$ exists.
Take a nonnegative $X\in L^q\setminus L^1$, and let $X_n=\min\{X,n\}\in L^\infty$  for $n\in\N$. Obviously, we have $X_n\uparrow X$. By Theorem \ref{th-p-CA}, $\rho$ has the form $f(\ES_p,\E)$ on $L^\infty$. It then follows from  [M] and
the condition $\lim_{c\to\infty}\rho(c)=\infty$ that $\lim_{y\to\infty}f(x,y)=\infty$.
Note that $\E[X_n]\to \infty$. Thus, we obtain
$$
\rho(X)\ge \liminf \rho(X_n)=\liminf f(\ES_p(X_n),\E[X_n])=\infty,
$$ a contradiction.
\end{proof}

Since a monetary risk measure $\rho$ necessarily satisfies $\lim_{c\to\infty} \rho(c)=\infty$, we conclude from Proposition \ref{prop-large} that  for $q\in [0, 1)$,   there is no monetary risk measure $\rho: L^q\to\R$ that satisfies    {\rm [$p$-CA]}.



\section{An economic reasoning  for concentration aversion}
\label{sec:6-r1}
For the key concept of concentration aversion in this paper,
it is assumed  in Definition \ref{def:CA} that there exists a tail event $A$ of regulatory concern.
Such a tail event $A$ is exogenous to the property [CA]; similarly, the structure of $p$-concentration is exogenous to the property [$p$-CA].
For a solid economic foundation of using [CA],
it would be more compelling to
justify the structure of $p$-concentration from   endogenous reasoning.\footnote{We thank an anonymous referee for bringing  this question up.}
Addressing this issue is the objective of this section.
We will show that, if a regulator is concerned about dangerous dependence structures satisfying a few axioms, then [CA] must hold for the regulator's risk measure.

Assume   $\X =L^\infty$ in this section, and denote  by $L^\infty_c\subseteq L^\infty  $ the set of all continuously distributed random variables in $L^\infty$. We focus on continuous distributions because we will work with dependence  structures, which will be modelled by copulas. {An $n$-copula is a  joint distribution function on $\R^n$ with standard uniform marginals. Sklar's theorem implies that the joint distribution $F$ of any random vector $\mathbf X$ can be expressed by a copula $C$ of $\mathbf X$ through $F(x_1,\dots,x_n)=C(F_1(x_1),\dots,F_n(x_n))$ where $F_1,\dots,F_n$ are the marginals of $F$. The copula $C$ is unique if $F_1,\dots,F_n$ are continuous.} We denote by $C_{\mathbf X}$ the copula of $\mathbf X$ if it is unique, and $\mathcal C_n$ the set of $n$-copulas. We refer to \cite{J14} for a general treatment of copulas.

Suppose that a regulator is concerned about random losses that are dependent in an adverse (dangerous) way.
The interpretation of dangerousness of a dependence structure is modelled by a set $\mathcal D\subseteq \mathcal C_2$. We will specify a suitable $\mathcal D$ later, but a primary example is
  \begin{align}
  \label{eq:Dpalt}
 \mathcal D_p  =\{C\in \mathcal C_2: C(p,p)=p\},~~~~p\in (0,1).
 \end{align}
By Theorem 3 of \cite{WZ21}, a copula of $(X,Y)$ is in $\mathcal D_p$ if and only if $(X,Y)$ is $p$-concentrated; hence, $\mathcal D_p$ is the set of bivariate copulas for $p$-concentrated random vectors.
 Since an adverse dependence structure bears more risk, the regulatory risk measure $\rho: L^\infty \to \R$
should
 satisfy \emph{$\mathcal D$-aversion}, that is, $\rho(X+Y)\le \rho(Z+W)$ for all  $X,Y,Z,W\in L^\infty$ satisfying $X\laweq Z$, $Y\laweq W$, and  a copula of $(Z,W)$ is in $\mathcal D$.
For the special case of $\mathcal D=\mathcal D_p$ for some $p\in (0,1)$,    $\mathcal D_p$-aversion is precisely [$p$-CA].\footnote{If we insist using copulas for continuously distributed random variables, we may alternatively require $\rho(X+Y)\le \rho(Z+W)$ to hold only for $X,Y,Z,W\in L^\infty_c$ and  $(Z,W)$ with a unique copula in $\mathcal D_p$. This property  is slightly weaker than [$p$-CA], but they are equivalent if $\rho$ is monotone and lower semicontinuous with respect to a.s.~convergence.}

In what follows, we discuss  reasonable choices of $\mathcal D$ for the regulator.
A common idea of diversification originates from the Law of Large Numbers (LLN), or its   refined versions, the Central Limit Theorems.
A dependence structure of risks is arguably quite dangerous if there is no effect of LLN; that is, the average risk does not vanish even if the number of risks in the pool tends to infinity. Inspired by this observation, we define non-diversifiability via violation of LLN.
For a  copula $C\in\mathcal C_2$, we say that a sequence $(X_n)_{n\in\N}$
is \emph{sequentially $C$-coupled} if  $X_n\laweq X_{n+1}$ and  $ C $ is the copula of $(X_n,X_{n+1})$ for each $n\in \N$. Note that the dependence of $(X_n,X_k)$ for $|n-k|>1$ is unspecified and it typically has some flexibility.
We say that $C\in\mathcal C_2$ is  \emph{non-diversifiable} if each sequentially $C$-coupled sequence $(X_n)_{n\in\N}$ in $L^\infty_c$
 breaks LLN, that is,
\begin{align}
\label{eq:breakthelaw}
\frac 1n \sum_{i=1}^n X_i -\mu \not \pcn 0~~~\mbox{as $n\to \infty$,}
\end{align}
where $\mu$ is the mean of $X_1$; otherwise $C$ is \emph{diversifiable}.
A simple example of a non-diversifiable copula is the comonotonic copula $C^+$, defined via $C^+(u,v)=\min(u,v)$.
For any sequentially $C^+$-coupled sequence $(X_n)_{n\in\N} $ in $L^\infty_c$, due to comotonicity, we have  $X_n=X_{n+1}$ for $n\in \N$, and
\begin{align}
\label{eq:breakthelaw2}
\frac 1n \sum_{i=1}^n X_i -\mu  = X_1 -\mu \not \pcn 0~~~\mbox{as $n\to \infty$}.
\end{align}   On the other hand, the independent copula $C^\perp$, defined via $C^\perp(u,v)=uv$,  is clearly diversifiable due to LLN.

Another important consideration is that positive dependence is more dangerous than negative dependence.
Recall that for two bivariate copulas $C$ and $C'$, the point-wise order $C'\ge C$, called the concordance order (see e.g., \cite{MS02}), compares the level of positive dependence.
 In particular, if $X\laweq Z$ and $Y\laweq W$ and $C_{X,Y}\ge C_{Z,W}$, then  $Z+W\preceq_{\rm SSD}X+Y$ (see e.g., \cite{WW20}), and thus $X+Y$ bears more risk than $Z+W$ in a commonly agreed sense of riskiness. The comonotonic  copula $C^+$ attains the maximum in concordance order.

Finally, a combination of dangerous scenarios, in the form  of a probability mixture, is still dangerous,
 because such a mixture  represents randomly picking a dangerous scenario.

Translating the above considerations into properties of $\mathcal D$, we define
a \emph{bivariate concentration class} which is a subset $\mathcal D$ of $\mathcal C_2$ satisfying the following three properties [ND], [DM] and [Cx]. \begin{enumerate}
\item[{[ND]}] Non-diversifiability: Each $C\in \mathcal D$ is non-diversifiable.
\item[{[DM]}] Dependence monotonicity:  If $C\in \mathcal D$ and $  C\le C'\in \mathcal C_2$, then $C'\in\mathcal D$.
\item[{[Cx]}]  Convexity: If $C_n\in \mathcal D$ for $n\in \N$, then $\sum_{n\in \N} \lambda_n C_n \in \mathcal D$ for any non-negative numbers $\lambda_n$, $n\in \N$ with $ \sum_{n\in \N} \lambda_n =1$.
\end{enumerate}
The first property,  [ND], simply means that
each dependence structure  in $\mathcal D$   breaks LLN.
The second property, [DM], says that if $C$ is considered dangerous and   $C'$ is more positively dependent than $C$, then $C'$ is also considered as dangerous. Convexity [Cx] means combining dangerous scenarios leads to a dangerous scenario.
The three properties are arguably  quite natural for a concept of concentration of interest to a regulator.

We first verify a few important examples of bivariate concentration classes.
\begin{proposition}
\label{prop:r1-1}
The sets
$\mathcal D_p$ for $p\in (0,1)$  and the singleton $\{C^+\}$
are bivariate concentration classes.
\end{proposition}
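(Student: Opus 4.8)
The plan is to verify the three defining properties [ND], [DM] and [Cx] separately for $\mathcal D_p$ (the main case) and then note that the argument for $\{C^+\}$ is either a special case or immediate. Recall from Theorem 3 of \cite{WZ21} that $C \in \mathcal D_p$ precisely when $(X,Y)$ with copula $C$ shares the $p$-tail event $\{F_X^{-1}(U) \ge F_X^{-1}(p)\}$ type structure; concretely, $\mathcal D_p = \{C \in \mathcal C_2 : C(p,p) = p\}$ as given in \eqref{eq:Dpalt}. I would keep the three verifications in that order.

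\textbf{Step 1: [DM] and [Cx] for $\mathcal D_p$.} These are the easy ones and I would dispatch them first. For [DM], if $C(p,p) = p$ and $C \le C' \in \mathcal C_2$, then $p = C(p,p) \le C'(p,p)$; but the Fr\'echet--Hoeffding upper bound gives $C'(p,p) \le \min(p,p) = p$, so $C'(p,p) = p$ and $C' \in \mathcal D_p$. For [Cx], the evaluation map $C \mapsto C(p,p)$ is linear, so if $C_n(p,p) = p$ for all $n$ and $\sum_n \lambda_n = 1$ with $\lambda_n \ge 0$, then $\bigl(\sum_n \lambda_n C_n\bigr)(p,p) = \sum_n \lambda_n p = p$; one also notes that a countable convex combination of copulas is again a copula (pointwise limit of convex combinations of distribution functions with uniform margins), so $\sum_n \lambda_n C_n \in \mathcal D_p$.

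\textbf{Step 2: [ND] for $\mathcal D_p$.} This is the substantive part and the main obstacle. Fix $C \in \mathcal D_p$ and a sequentially $C$-coupled sequence $(X_n)_{n\in\N}$ in $L^\infty_c$, with common marginal $F$ and mean $\mu$. I need to show $\frac1n\sum_{i=1}^n X_i - \mu \not\pcn 0$. Since $C(p,p)=p$, the pair $(X_n, X_{n+1})$ shares a $p$-tail event, i.e.\ there is an event $A_n$ with $\p(A_n) = 1-p$ on which both $X_n$ and $X_{n+1}$ take their largest $1-p$ fraction of values; crucially, because consecutive pairs share the \emph{same} marginal and the \emph{same} $p$-tail structure, one can take a single event $A$ with $\p(A) = 1-p$ that is a $p$-tail event for \emph{every} $X_n$ simultaneously (this uses that $A_n = \{X_n > F^{-1}(p)\}$ up to null sets when $F$ is continuous, and continuity of $F$ forces $\p(X_n > F^{-1}(p)) = 1-p$, so all the $A_n$ coincide a.s.). On $A$, each $X_i \ge F^{-1}(p)$, so on $A$ we have $\frac1n\sum_{i=1}^n X_i \ge F^{-1}(p) > \mu$ (strict because $F$ is continuous and nondegenerate — if $F$ were degenerate the sequence is constant and the claim is trivial, but degenerate laws are excluded by $L^\infty_c$ unless one treats that case separately). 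Hence $\p\bigl(\frac1n\sum_{i=1}^n X_i - \mu \ge F^{-1}(p) - \mu\bigr) \ge \p(A) = 1-p > 0$ for every $n$, so the averages cannot converge to $0$ in probability. The delicate point to get right is the claim that the $p$-tail events can be chosen to be a \emph{common} event across all $n$; I would spell this out carefully using continuity of the marginal and the definition of sequential $C$-coupling, possibly invoking the characterization of $p$-tail events in Definition \ref{p-tail-event-a} and Theorem 3 of \cite{WZ21}.

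\textbf{Step 3: the singleton $\{C^+\}$.} Here [DM] holds because $C^+$ is the maximum in concordance order, so the only $C' \ge C^+$ is $C^+$ itself; [Cx] is trivial since any convex combination of copies of $C^+$ equals $C^+$; and [ND] is exactly the computation displayed in \eqref{eq:breakthelaw2} in the text, namely that a sequentially $C^+$-coupled sequence satisfies $X_n = X_{n+1}$ a.s., hence all $X_n$ equal $X_1$, so $\frac1n\sum_{i=1}^n X_i - \mu = X_1 - \mu \not\pcn 0$ as $X_1$ is nondegenerate. I would present this case in one or two sentences after the $\mathcal D_p$ argument, noting that the hard work is already done. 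Throughout, the only real obstacle is Step 2's common-tail-event claim; the rest is bookkeeping with the Fr\'echet bound and linearity of copula evaluation.
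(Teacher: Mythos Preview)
Your treatment of [DM], [Cx], and the singleton $\{C^+\}$ is correct and coincides with the paper's argument, and your identification of the common $p$-tail event $A=\{X_n>F^{-1}(p)\}$ in Step~2 is right (the paper obtains the same event via Corollary~A.1 of \cite{WZ21}).

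There is, however, a genuine gap in Step~2: the inequality $F^{-1}(p)>\mu$ is \emph{false} in general. For instance, if $X_1\sim\mathrm{U}[0,1]$ and $p=0.3$, then $F^{-1}(p)=0.3<0.5=\mu$; more generally the $p$-quantile lies below the mean whenever $p$ is small, and for any symmetric nondegenerate law whenever $p<1/2$. Your pointwise lower bound $\frac1n\sum_{i=1}^n X_i\ge F^{-1}(p)$ on $A$ is correct, but it does not separate the averages from $\mu$ unless $F^{-1}(p)>\mu$, which you cannot assume.

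The paper's remedy is to replace the quantile--mean comparison by a conditional-expectation argument. Writing $\widetilde X_n=\frac1n\sum_{i=1}^n X_i$, the common tail event gives
\[
\E\bigl[\widetilde X_n\mid A\bigr]=\E[X_1\mid A]=\ES_p(X_1)>\E[X_1]=\mu\qquad\text{for every }n,
\]
where the strict inequality $\ES_p>\E$ holds for every nondegenerate law. Since $(X_n)\subseteq L^\infty$, the sequence $(\widetilde X_n)$ is uniformly bounded, hence uniformly integrable; if $\widetilde X_n\pcn\mu$ held, uniform integrability would force $\E[\widetilde X_n\mid A]\to\mu$, contradicting the display above. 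This yields [ND] for all $p\in(0,1)$.
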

\begin{proof}
We first verify the statement for $\{C^+\}$.
A singleton is obviously convex, and thus [Cx] holds.
By \eqref{eq:breakthelaw2},  $\{C^+\}$
 satisfies [ND]. Since $C^+$ is the maximum in concordance order, $\{C^+\}$ satisfies [DM].
 Next, we show that $\mathcal D_p$ for $p\in (0,1)$ satisfies [Cx], [ND] and [DM].
 The property [Cx]   follows directly from \eqref{eq:Dpalt}.
 Note that $C'(p,p)\le C^+(p,p)=p$ for all $C'\in \mathcal C_2$.
 Using \eqref{eq:Dpalt}, $C'\ge C\in \mathcal D_p$ implies $C'(p,p)=C(p,p)=p$, and thus $C'\in \mathcal D_p$. This shows that $\mathcal D_p$ satisfies [DM].
To show [ND], take $C\in \mathcal D_p$ and
construct any sequentially $C$-coupled sequence $(X_n)_{n\in \N}$.
Since $(X_n,X_{n+1})$ is $p$-concentrated, by Corollary A.1 of \cite{WZ21}, $X_n$ and $X_{n+1}$ share the same a.s.~unique $p$-tail event $A=\{X_n>x_p\}=\{X_{n+1}>x_p\}$ where $x_p=\VaR_p(X_1)$. Applying this argument to $n\in \N$, we know that $X_1,X_2,\dots$ share the same tail event $A$ which does not depend on $n$. Write $\widetilde X_n= n^{-1}  \sum_{i=1}^n X_i$.
We note that $\widetilde X_n\id_A$ does not converge to $\E[X_1] \id_A$ since
$$
\E\left[ \widetilde X_n \mid A\right] =\E[X_1\mid A] =\ES_p(X_1) > \E[X_1]
$$
and $(\widetilde X_n)_{n\in \N}$ is uniformly integrable.
Therefore,
\begin{align*}
\widetilde X_n  =  \widetilde X_n \id_{A} + \widetilde X_n \id_{A^c}  \not \pcn \E[X_1],
\end{align*}
and  thus $(X_n)_{n\in \N}$ is non-diversifiable.
This shows that $\mathcal D_p$ satisfies [ND]. Therefore, $\mathcal D_p$ is a bivariate concentration class.
\end{proof}
 If both $\mathcal D$ and $\mathcal D'$
 are bivariate concentration classes, then so is  $\mathcal D\cap \mathcal D'$.
 Using this relation, we can construct   bivariate concentration classes other than the ones in Proposition \ref{prop:r1-1}.
  The next result, which is the main technical result in this section, shows that a bivariate concentration class may not contain anything more than those in $ \mathcal D_p$ for some $p\in(0,1)$.

 \begin{theorem}\label{th:r1-1}
A  copula $C\in \mathcal C_2$
  is in  a bivariate concentration class
 if and only if
 $C\in \mathcal D_p$ for some $p\in (0,1)$.
\end{theorem}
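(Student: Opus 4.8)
The plan is as follows. The ``if'' direction is immediate: if $C\in\mathcal D_p$, then $C$ lies in the bivariate concentration class $\mathcal D_p$ by Proposition \ref{prop:r1-1}. For the ``only if'' direction, suppose $C$ belongs to some bivariate concentration class $\mathcal D$. Since $C\in\mathcal D$ and [DM] is closure upward in the concordance order, the whole up-set $\{C'\in\mathcal C_2:C'\ge C\}$ is contained in $\mathcal D$; hence by [ND] \emph{every} copula dominating $C$ in concordance order is non-diversifiable. (Note that $\{C'\ge C\}$ is itself closed under both [DM] and [Cx], so it is in fact the smallest family containing $C$ that is closed under these two operations.) Recalling \eqref{eq:Dpalt} together with $C(p,p)\le C^+(p,p)=p$ for all $p$, the negation of ``$C\in\mathcal D_p$ for some $p\in(0,1)$'' is exactly ``$C(t,t)<t$ for all $t\in(0,1)$''. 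So it suffices to show: if $C(t,t)<t$ for every $t\in(0,1)$, then there is a copula $D$ with $D\ge C$ that is diversifiable --- this contradicts the previous sentence and completes the proof.

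To produce such a $D$, I would exploit the slack between $C$ and $C^+$ along the diagonal. The map $\Delta(t):=t-C(t,t)$ is continuous (since $C$ is Lipschitz) and strictly positive on $(0,1)$, hence bounded below by a positive constant on every compact subinterval of $(0,1)$. I would construct a copula $D$ whose Markov transition kernel $u\mapsto P_D(u,\cdot)$ (the conditional law of the second coordinate given the first, under $D$) genuinely spreads mass --- for instance admitting a Doeblin-type minorization $P_D(u,\cdot)\ge\varepsilon\,\nu(\cdot)$ for all $u$, with $\varepsilon>0$ and $\nu$ a probability measure supported in the interior of $[0,1]$ --- while keeping $D\ge C$: one is forced to set $D=C^+$ only on the closed set where $C$ already equals $C^+$, which is disjoint from the diagonal because $C(t,t)<t$; near the endpoints $0$ and $1$, where $\Delta$ is small, $D$ is taken close to $C^+$ with only a small ``leakage'', and in the bulk $D$ is made diffuse. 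For such a $D$ the stationary Markov chain $(U_n)_{n\in\N}$ with uniform marginals is ergodic (no nontrivial invariant set), so by Birkhoff's ergodic theorem $n^{-1}\sum_{i=1}^nU_i\to 1/2$ a.s.; taking $X_n=U_n\in L^\infty_c$ yields a sequentially $D$-coupled sequence satisfying the law of large numbers, i.e.\ $D$ is diversifiable.

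The heart of the argument --- and the step I expect to be hardest --- is this construction, which must reconcile two opposing requirements: $D\ge C$ pins $D$ to $C^+$ at the ``worst'' points and forces $D$ to be nearly comonotone in a neighbourhood of $0$ and $1$ (the gap $\Delta(t)$ may tend to $0$ faster than linearly there), whereas ergodicity of the $D$-chain demands enough diffusion to destroy every nontrivial invariant set. The delicate region is near the endpoints: the amount of leakage permitted there is controlled by $\Delta$, yet it must be arranged so that the induced chain on the ``blocks'' is irreducible while the marginals of $D$ remain uniform. Everything else --- the reduction via [DM] and [ND], the verification that $\{C'\ge C\}$ is closed under [DM] and [Cx], and the ergodic-theorem conclusion --- is routine.
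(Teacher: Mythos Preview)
Your high-level strategy coincides with the paper's: reduce via [DM] to showing that if $C(t,t)<t$ for every $t\in(0,1)$ then some $D\ge C$ is diversifiable, and witness diversifiability through a stationary Markov chain with transition copula $D$ obeying LLN. The reduction is correct, and like the paper you only use [ND] and [DM] here, not [Cx].

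The genuine gap is exactly where you flag it: the construction of $D$. Your suggested mechanism, a uniform Doeblin minorization $P_D(u,\cdot)\ge\varepsilon\,\nu$ for all $u$, is in general \emph{incompatible} with the constraint $D\ge C$. For any $D\ge C$ the $D$-mass on $[0,p]\times(p,1]$ equals $p-D(p,p)\le p-C(p,p)=\Delta(p)$, so the average of $P_D\bigl(u,(p,1]\bigr)$ over $u\in[0,p]$ is at most $\Delta(p)/p$; there exist copulas with $C(t,t)<t$ on $(0,1)$ yet $\Delta(p)/p\to 0$ as $p\downarrow 0$ (any copula whose diagonal satisfies $\delta(t)/t\to 1$, which is easily arranged), and then no uniform minorization can survive. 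Your informal ``leakage near the endpoints'' is the right instinct, but turning it into an actual copula that keeps uniform marginals, stays above $C$ everywhere, and still yields an ergodic chain is the whole technical content of the theorem; it cannot be done by a soft argument.

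The paper resolves this with a dedicated construction (Lemma~\ref{lem:r1-1}): starting from $C$, it first mixes with $C^+$, then applies two rounds of concordance-increasing (Tchen) rearrangements to produce $\widetilde C\ge C$ having strictly positive density on the band $B_C=\{(u,v):t_C(u)<v<s_C(u)\}$. The assumption $C(t,t)<t$ for all $t$ forces $t_C(p)<p<s_C(p)$, so the diagonal sits in the interior of $B_C$; the canonical Markov chain driven by $\widetilde C_{2|1}$ is then \emph{irreducible} (not uniformly ergodic), and irreducibility plus stationarity already gives LLN. In short, the paper aims for irreducibility rather than Doeblin, and the construction that delivers it is the substantive part you have not supplied.
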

\begin{proof}
The ``if" statement follows directly from Proposition \ref{prop:r1-1}. Below we will show the ``only if" statement.
We first present some technical preparations.
For any copula $C$ and $p\in (0,1)$, denote by $t_C(p)$  the essential infimum of
the distribution of $V$ given $U\ge p$ where $(U,V)\sim C$,
and by  $s_C(p)$   the essential supremum of
the distribution of $V$ given $U \le  p$.
 Since $\p(U\ge p)=1-p$, the essential infimum of
the distribution of $V$ given $U\ge p$ is at most $p$, which is the case when $\{U\ge p\}$ is a $p$-tail event of $V$.
Therefore, $t_C(p)\le p$ and similarly, $s_C(p)\ge p$ for $p\in (0,1)$.
Moreover, both $t_C$ and $s_C$ are increasing curves on $(0,1)$. See Figure \ref{fig:r1-1}a for an illustration, where the grey area is  between the curves $t_C$ and $s_C$.

To proceed, we need the following lemma, which may be of   interest  in dependence theory by its own right. 

\begin{lemma}\label{lem:r1-1}
For any $C\in \mathcal C_2$, there exists a copula  $\widetilde C\ge C$ such that
$\widetilde C $ has positive density (possibly plus a non-density component) on the region $B_C:=\{(u,v)  \in [0,1]^2: t_C(u)< v < s_C(u)\}$.
\end{lemma}

The proof of Lemma \ref{lem:r1-1} requires   some delicate constructions of copulas, and it is put in Appendix \ref{app:A}. Here, we briefly explain the intuition behind the proof. For any given copula $C$ supported in a subset of $B_C$ possibly with no density (see Figure \ref{fig:r1-1}a), we first mix it with the comonotonic copula $C^+$, so that the resulting copula $C' \ge C$ has a support that includes the diagonal line in $[0,1]$ (see Figure \ref{fig:r1-1}b). Second, we run a continuum of  concordance-increasing (CI) transfers of \cite{T80} (see Figure \ref{fig:r1-1}b-c) on 
$C'$ to obtain another copula $\widehat C \ge C'$ which has positive density on a subset of $B_C$. 
Finally, we run another continuum of   CI transfers on $\widehat C$ to arrive at a copula $\widetilde C \ge \widehat C$ which has positive density on $B_C$  (see Figure \ref{fig:r1-1}c-d).  

\begin{figure}[t]    
 \begin{subfigure}[b]{0.23\textwidth}
         \centering
\begin{tikzpicture}
\draw[<->] (0,3.4) -- (0,0) -- (3.4,0);
\draw[gray,dotted] (0,0.6) -- (3,0.6);
\draw[gray,dotted] (0,1.2) -- (3,1.2);
\draw[gray,dotted] (0,1.8) -- (3,1.8);
\draw[gray,dotted] (0,2.4) -- (3,2.4);
\draw[gray,dotted] (0.6,0) -- (0.6,3);
\draw[gray,dotted] (1.2,0) -- (1.2,3);
\draw[gray,dotted] (1.8,0) -- (1.8,3);
\draw[gray,dotted] (2.4,0) -- (2.4,3);
\draw[gray,dotted] (3,0) -- (3,3);
\draw[gray,dotted] (0,3) -- (3,3); 
\node[below] at (3,0) {$1$}; 
\node[left] at (0,3) {$1$}; 
\node[below] at (0,0) {$0$}; 
\draw[thick] (0,0.6) -- (0.6,0);
\draw[thick] (1.2,1.2) -- (1.8,1.8);
\draw[thick] (0.6,1.8) -- (1.2,2.4); 
\draw[thick] (1.8,0.6) -- (2.4,1.2);
\draw[thick] (2.4,2.4) -- (3,3);  
\fill [top color=gray, bottom color=gray, opacity=0.05] (0,0)--(0,0.6)--(0.6,0.6)--(0.6,1.8)--(1.2,2.4)--(2.4,2.4)--(2.4,1.2)--(1.8,0.6)--(0.6,0.6)--(0.6,0)--(0,0); 
\end{tikzpicture} 
         \caption{\footnotesize Black lines: support of $C$; grey area: $B_C$}
         \label{fig:r1-2-1}
     \end{subfigure}
     ~
      \begin{subfigure}[b]{0.23\textwidth}
         \centering
\begin{tikzpicture}
\draw[<->] (0,3.4) -- (0,0) -- (3.4,0);
\draw[gray,dotted] (0,0.6) -- (3,0.6);
\draw[gray,dotted] (0,1.2) -- (3,1.2);
\draw[gray,dotted] (0,1.8) -- (3,1.8);
\draw[gray,dotted] (0,2.4) -- (3,2.4);
\draw[gray,dotted] (0.6,0) -- (0.6,3);
\draw[gray,dotted] (1.2,0) -- (1.2,3);
\draw[gray,dotted] (1.8,0) -- (1.8,3);
\draw[gray,dotted] (2.4,0) -- (2.4,3);
\draw[gray,dotted] (3,0) -- (3,3);
\draw[gray,dotted] (0,3) -- (3,3); 
\node[below] at (3,0) {$1$}; 
\node[left] at (0,3) {$1$}; 
\node[below] at (0,0) {$0$}; 
\draw[thick] (0,0.6) -- (0.6,0);
\draw[thick] (0,0) -- (3,3);
\draw[thick] (0.6,1.8) -- (1.2,2.4); 
\draw[thick] (1.8,0.6) -- (2.4,1.2);
\draw[thick] (2.4,2.4) -- (3,3);  
\fill [top color=gray, bottom color=gray, opacity=0.05] (0,0)--(0,0.6)--(0.6,0.6)--(0.6,1.8)--(1.2,2.4)--(2.4,2.4)--(2.4,1.2)--(1.8,0.6)--(0.6,0.6)--(0.6,0)--(0,0); 
\draw [->, blue, dashed]  (1.5,1.5)--(1.5,2.05); 
\draw [->, blue, dashed]  (1.5,1.5)--(0.95,1.5); 
\draw [->, blue, dashed]  (0.9,2.1)--(1.45,2.1); 
\draw [->, blue, dashed]  (0.9,2.1)--(0.9,1.55);
\filldraw[black] (1.5,1.5) circle (1pt);  
\filldraw[blue] (1.5,2.1) circle (1pt);  
\filldraw[black] (0.9,2.1) circle (1pt);  
\filldraw[blue] (0.9,1.5) circle (1pt);  
\end{tikzpicture} 
         \caption{\footnotesize Black lines: support of $C'$; arrows: a CI transfer}
         \label{fig:r1-2-2}
     \end{subfigure}    
     ~
        \begin{subfigure}[b]{0.23\textwidth}
         \centering
\begin{tikzpicture}
\draw[<->] (0,3.4) -- (0,0) -- (3.4,0);
\draw[gray,dotted] (0,0.6) -- (3,0.6);
\draw[gray,dotted] (0,1.2) -- (3,1.2);
\draw[gray,dotted] (0,1.8) -- (3,1.8);
\draw[gray,dotted] (0,2.4) -- (3,2.4);
\draw[gray,dotted] (0.6,0) -- (0.6,3);
\draw[gray,dotted] (1.2,0) -- (1.2,3);
\draw[gray,dotted] (1.8,0) -- (1.8,3);
\draw[gray,dotted] (2.4,0) -- (2.4,3);
\draw[gray,dotted] (3,0) -- (3,3);
\draw[gray,dotted] (0,3) -- (3,3); 
\node[below] at (3,0) {$1$}; 
\node[left] at (0,3) {$1$}; 
\node[below] at (0,0) {$0$}; 
\draw[thick] (1.2,1.2) -- (1.8,1.8);  
\draw[thick] (2.4,2.4) -- (3,3);  
\fill [top color=gray, bottom color=gray, opacity=0.05] (0,0)--(0,0.6)--(0.6,0.6)--(0.6,1.8)--(1.2,2.4)--(2.4,2.4)--(2.4,1.2)--(1.8,0.6)--(0.6,0.6)--(0.6,0)--(0,0); 
 \fill [top color=blue, bottom color=blue, opacity=0.1] (0,0)--(0,0.6)--(0.6,0.6)--(0.6,0)
 --(0,0);
 \fill [top color=blue, bottom color=blue, opacity=0.1]  (0.6,1.8)--(1.2,2.4)--(1.2,1.2)-- (2.4,1.2)--(1.8,0.6)--(0.6,0.6);
  \fill [top color=blue, bottom color=blue, opacity=0.1]   (1.2,2.4)--(2.4,2.4)-- (2.4,1.2)--(1.8,1.2)--(1.8,1.8)--(1.2,1.8); 
  \draw [->, blue, dashed]  (1.4,1.4)--(1.4,1.65); 
\draw [->, blue, dashed]  (1.4,1.4)--(0.95,1.4); 
\draw [->, blue, dashed]  (0.9,1.7)--(1.35,1.7); 
\draw [->, blue, dashed]  (0.9,1.7)--(0.9,1.45);
\filldraw[black] (1.4,1.4) circle (1pt);  
\filldraw[blue] (1.4,1.7) circle (1pt);  
\filldraw[black] (0.9,1.7) circle (1pt);  
\filldraw[blue] (0.9,1.4) circle (1pt);  
\end{tikzpicture} 
         \caption{\footnotesize Blue area: density area of $\widehat C $; arrows: a CI transfer}
         \label{fig:r1-2-3}
     \end{subfigure} ~
             \begin{subfigure}[b]{0.23\textwidth}
         \centering
\begin{tikzpicture}
\draw[<->] (0,3.4) -- (0,0) -- (3.4,0);
\draw[gray,dotted] (0,0.6) -- (3,0.6);
\draw[gray,dotted] (0,1.2) -- (3,1.2);
\draw[gray,dotted] (0,1.8) -- (3,1.8);
\draw[gray,dotted] (0,2.4) -- (3,2.4);
\draw[gray,dotted] (0.6,0) -- (0.6,3);
\draw[gray,dotted] (1.2,0) -- (1.2,3);
\draw[gray,dotted] (1.8,0) -- (1.8,3);
\draw[gray,dotted] (2.4,0) -- (2.4,3);
\draw[gray,dotted] (3,0) -- (3,3);
\draw[gray,dotted] (0,3) -- (3,3); 
\node[below] at (3,0) {$1$}; 
\node[left] at (0,3) {$1$}; 
\node[below] at (0,0) {$0$};  
\draw[thick] (2.4,2.4) -- (3,3);  
\fill [top color=gray, bottom color=gray, opacity=0.05] (0,0)--(0,0.6)--(0.6,0.6)--(0.6,1.8)--(1.2,2.4)--(2.4,2.4)--(2.4,1.2)--(1.8,0.6)--(0.6,0.6)--(0.6,0)--(0,0); 
 \fill [top color=blue, bottom color=blue, opacity=0.1] (0,0)--(0,0.6)--(0.6,0.6)--(0.6,0)
 --(0,0);
 \fill [top color=blue, bottom color=blue, opacity=0.1]  (0.6,1.8)--(1.2,2.4)--(1.2,1.2)-- (2.4,1.2)--(1.8,0.6)--(0.6,0.6);
  \fill [top color=blue, bottom color=blue, opacity=0.1]   (1.2,2.4)--(2.4,2.4)-- (2.4,1.2)--(1.8,1.2)--(1.8,1.8)--(1.2,1.8);
    \fill [top color=blue, bottom color=blue, opacity=0.1]   (1.2,1.2)--(1.2,1.8)-- (1.8,1.8)--(1.8,1.2); 
\end{tikzpicture} 
         \caption{\footnotesize Blue area: density area of $\widetilde C $, which equals $B_C$}
         \label{fig:r1-2-4}
     \end{subfigure}

\caption{Intuition behind Lemma \ref{lem:r1-1}; here, $C$ is a $5\times 5$ checkerboard copula} 
          \label{fig:r1-1}
\end{figure}
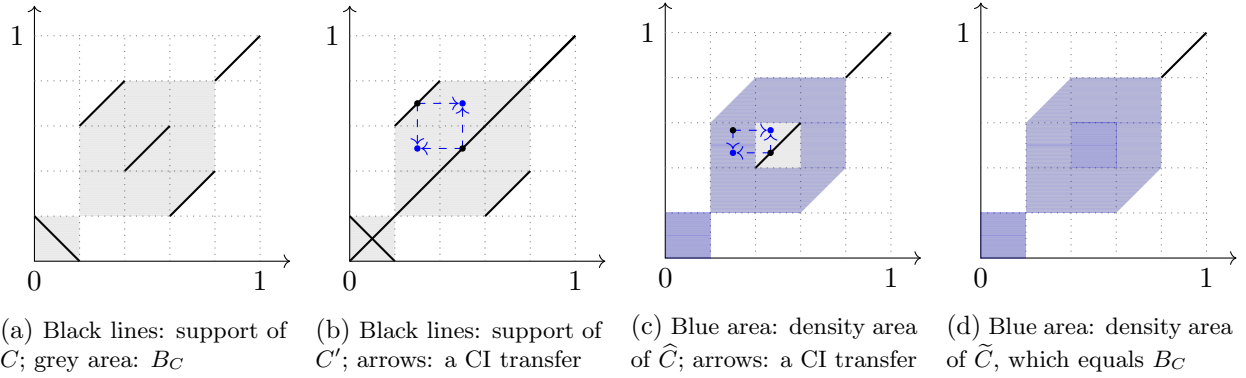

We   continue to prove Theorem \ref{th:r1-1}.
Let $\mathcal D$ be a bivariate concentration class and take $C\in \mathcal D$. Suppose for the purpose of contradiction that there does not exist $p\in (0,1)$ such that $C(p,p)=p$.
Note that if $s_C(p)=p$, then $C(p,p)=\p(U\le p,V\le p)= \p(U\le p) =p$. Similarly, if $t_C(p)=p$, then $C(p,p)=p$. Hence, our assumption on $C$ implies that $t_C(p)<p<s_C(p)$ for all $p\in (0,1)$.

Take $\widetilde C $ as the one in Lemma \ref{lem:r1-1}.
Since $\mathcal D$ is a bivariate concentration class and $\widetilde C \ge C$, we have
$\widetilde C \in \mathcal D$.
Let
\begin{equation}\label{eq:conditionalcopula} \widetilde C _{2|1}(v|u)= \frac{\partial  \widetilde C}{\partial u}(u,v),~~~~\mbox{if the partial derivative exists}.\end{equation}
It is known that $\widetilde C _{2|1}$ is a conditional distribution of $V$ given $U$, where $(U,V)\sim \widetilde C $; see \citet[Section 2.1.3]{J14}. As a consequence, $\widetilde C _{2|1}$ exists almost everywhere and takes value in $[0,1] $.
Let
 $ \widetilde C ^{-1}_{2|1}(v|u)$ be the corresponding conditional $v$-quantile of $V$ given $U=u$ for $u,v\in (0,1)$; we omit ``almost everywhere".

Take  a sequence $(U_n)_{n\in \N}$ of iid  random variables uniformly distributed on $[0,1]$.
We will construct a Markov process $(X_n)_{n\in\N}$ as follows. Let
\begin{align}
\label{eq:Markov}
 X_1=U_1 \mbox{~~~and~~~~}X_{n+1}= \widetilde C ^{-1}_{2|1} (U_{n+1}|X_{n}) \mbox{~for $n\ge 1$}.
\end{align}
By construction, $(X_{n},X_{n+1})$ has the distribution $\widetilde C $ for each $n$; see \citet[Section 6.9]{J14}.
Moreover, $(X_n)_{n\in \N}$ is obviously Markov and stationary.
Since $t_{C}(u)<u<s_C(u)$ for all $u\in (0,1)$
and $ \widetilde C $ has positive density on $B_C$,
we know that
 the Markov process $(X_n)_{n\in \N}$ is irreducible.
Since an irreducible and stationary Markov process
satisfies   LLN  (see Theorem 4.3 of \cite{T96}), we have 
$n^{-1}\sum_{i=1}^n X_n \pcn \E[X_1].$
This shows that $\widetilde C $ is diversifiable, contradicting $\widetilde C \in \mathcal D$.
 Therefore, we conclude that $C(p,p)=p$ for some $p\in (0,1)$, leading to the desired ``only if" statement.
\end{proof}

 \begin{remark}
 We comment on two technical points.
 First, convexity [Cx] is not needed in the proof of the ``only if" direction in Theorem \ref{th:r1-1}.
Therefore, a copula $C$ is in any set satisfying [ND] and [DM] if and only if
 $C\in \mathcal D_p$ for some $p\in (0,1)$.
  Second, to show that $\widetilde C $ is diversifiable in the above proof, we constructed the Markov process $(X_n)_{n\in \N}$ in \eqref{eq:Markov} with uniform marginal distributions.
A strictly increasing transform on $(X_n)_{n\in \N}$ does not matter as  the resulting Markov process is always irreducible.
Hence, we may alternatively define non-diversifiability of a copula by  requiring \eqref{eq:breakthelaw} to  hold only for the sequentially $\widetilde C $-coupled process with Markov dependence \eqref{eq:Markov} and a uniform marginal distribution (or another continuous marginal distribution), and our results in Proposition \ref{prop:r1-1} and Theorem \ref{th:r1-1} remain valid.
 \end{remark}

 Theorem \ref{th:r1-1} leads to the following characterization of $\mathcal D_p$ as an important subclass of bivariate concentration classes.
A   bivariate concentration class $\mathcal D $ is \emph{maximal}  if there does not exist another   bivariate concentration class $\mathcal D' \ne\mathcal D$  containing $\mathcal D$.

  \begin{proposition}\label{prop:r1-2}
The set $\mathcal D\subseteq \mathcal C_2$ is a maximal    bivariate concentration class if and only if  $\mathcal D=  \mathcal D_p$  for some $p\in (0,1)$.
  \end{proposition}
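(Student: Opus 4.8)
The plan is to derive this characterization of $\mathcal{D}_p$ directly from Theorem \ref{th:r1-1}, which already identifies the members of bivariate concentration classes as exactly those copulas lying in some $\mathcal{D}_p$. The two directions will be handled separately, and the main work is to establish that each $\mathcal{D}_p$ is in fact maximal, together with the observation that a bivariate concentration class not of this form cannot be maximal.

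For the ``if'' direction, suppose $\mathcal{D} = \mathcal{D}_p$ for some $p \in (0,1)$. By Proposition \ref{prop:r1-1} it is a bivariate concentration class, so I only need maximality. Suppose $\mathcal{D}'$ is a bivariate concentration class with $\mathcal{D}_p \subseteq \mathcal{D}'$. Take any $C \in \mathcal{D}'$. By Theorem \ref{th:r1-1} (the ``only if'' part applied to $\mathcal{D}'$), there exists $p' \in (0,1)$ with $C(p',p') = p'$, i.e.\ $C \in \mathcal{D}_{p'}$. I then need to show $p' = p$, so that $C \in \mathcal{D}_p$ and hence $\mathcal{D}' \subseteq \mathcal{D}_p$. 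The key point is that if $p' \neq p$, then $\mathcal{D}_{p'}$ (and in particular some explicit copula in it) cannot be consistent with containing all of $\mathcal{D}_p$; concretely, I would exhibit a single copula $C^* \in \mathcal{D}_p$ that does \emph{not} lie in any $\mathcal{D}_q$ for $q \neq p$ — for instance a copula for which $C^*(q,q) < q$ for every $q \neq p$ while $C^*(p,p) = p$ — forcing $\mathcal{D}' \subseteq \mathcal{D}_p$ whenever $\mathcal{D}_p \subseteq \mathcal{D}'$. Actually the cleaner route: since $\mathcal{D}_p \subseteq \mathcal{D}'$ and every element of $\mathcal{D}'$ sits in some $\mathcal{D}_{p'}$, it suffices to show the $\mathcal{D}_p$'s are pairwise ``incomparable'' in the sense that $\mathcal{D}_p \not\subseteq \mathcal{D}_{p'}$ for $p \neq p'$. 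This follows because one can build $C \in \mathcal{D}_p$ with $t_C(p') < p' < s_C(p')$ for all $p' \neq p$ (e.g.\ a suitable mixture placing the required mass structure only at level $p$), so $C \notin \mathcal{D}_{p'}$. Consequently the union of all $\mathcal{D}_q$ with $q \neq p$ does not contain $\mathcal{D}_p$, and so $\mathcal{D}'$, being contained in $\bigcup_{p'} \mathcal{D}_{p'}$ but containing $\mathcal{D}_p$, must actually equal $\mathcal{D}_p$.

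For the ``only if'' direction, let $\mathcal{D}$ be a maximal bivariate concentration class. It is nonempty (a bivariate concentration class must contain $C^+$, since by [DM] and the fact that $C^+$ dominates everything, any class is nonempty iff it contains $C^+$ — or we simply note $\{C^+\}$ shows nonemptiness is the natural baseline; in any case a maximal one is nonempty). Pick $C \in \mathcal{D}$; by Theorem \ref{th:r1-1} there is $p \in (0,1)$ with $C \in \mathcal{D}_p$. Now $\mathcal{D}_p$ is itself a bivariate concentration class by Proposition \ref{prop:r1-1}. I claim $\mathcal{D} \subseteq \mathcal{D}_p$: indeed $\mathcal{D} \cup \mathcal{D}_p$ — or more safely, I want a bivariate concentration class containing both. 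Here I should be careful, since the union of two bivariate concentration classes need not satisfy [Cx]. The right move is: the set $\mathcal{D} \cap \mathcal{D}_p$ is a bivariate concentration class (intersections preserve [ND], [DM], [Cx], as noted in the text right before Theorem \ref{th:r1-1}) and is contained in $\mathcal{D}$. That is the wrong inclusion direction for maximality though. Instead, I use Theorem \ref{th:r1-1} on $\mathcal{D}$ directly: every $C' \in \mathcal{D}$ lies in $\mathcal{D}_{p'}$ for some $p'$, so $\mathcal{D} \subseteq \bigcup_{p' \in (0,1)} \mathcal{D}_{p'}$; combined with $\mathcal{D}_p \subseteq \bigcup \mathcal{D}_{p'}$ and the incomparability established above, together with maximality of $\mathcal{D}$ and the fact that $\mathcal{D}_p$ is a bivariate concentration class, I conclude as follows: consider whether $\mathcal{D}_p \subseteq \mathcal{D}$. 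If so, maximality of $\mathcal{D}$ forces $\mathcal{D} = \mathcal{D}_p$ once I show $\mathcal{D} \subseteq \mathcal{D}_p$; and $\mathcal{D} \subseteq \mathcal{D}_p$ holds because any $C' \in \mathcal{D}$ with $C' \in \mathcal{D}_{p'}$, $p' \neq p$, would together with $C \in \mathcal{D}_p$ generate via [Cx] a mixture $\lambda C + (1-\lambda) C'$ in $\mathcal{D}$; by Theorem \ref{th:r1-1} this mixture lies in some $\mathcal{D}_{p''}$, but a direct computation shows $(\lambda C + (1-\lambda)C')(q,q)$ equals $q$ for no $q$ (since it equals $q$ only if both $C(q,q)=q$ and $C'(q,q)=q$, impossible as $p \neq p'$), a contradiction. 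Hence $\mathcal{D} \subseteq \mathcal{D}_p$, and then $\mathcal{D} = \mathcal{D}_p$ by maximality (using that $\mathcal{D}_p$ is a bivariate concentration class containing $\mathcal{D}$). The only remaining gap is verifying $\mathcal{D}_p \subseteq \mathcal{D}$ or otherwise handling the case it fails — but if $\mathcal{D}_p \not\subseteq \mathcal{D}$, then $\mathcal{D} \cap \mathcal{D}_p \subsetneq \mathcal{D}_p$ while we showed $\mathcal{D} \subseteq \mathcal{D}_p$, and maximality of $\mathcal{D}$ contradicts $\mathcal{D}_p \supsetneq \mathcal{D}$ being a strictly larger bivariate concentration class.

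The main obstacle I anticipate is the bookkeeping around [Cx]: unlike [ND] and [DM], convexity does not behave well under unions, so I must route every argument through either intersections or through mixing two specific copulas and invoking Theorem \ref{th:r1-1} on the result. The cleanest formulation is probably: (1) show $\mathcal{D}_p \not\subseteq \mathcal{D}_{p'}$ for $p \neq p'$ by an explicit copula; (2) show any bivariate concentration class $\mathcal{D}$ containing two copulas $C \in \mathcal{D}_p$, $C' \in \mathcal{D}_{p'}$ with $p \neq p'$ is impossible, by mixing and applying Theorem \ref{th:r1-1} to get the contradiction $\bigl(\tfrac12 C + \tfrac12 C'\bigr)(q,q) = q$ for no $q$; (3) conclude that every bivariate concentration class is contained in a unique $\mathcal{D}_p$, whence the maximal ones are exactly the $\mathcal{D}_p$. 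Step (2) is the crux and the computation there — that $(\lambda C + (1-\lambda)C')(q,q) = q$ requires $C(q,q) = C'(q,q) = q$, using $C(q,q) \le q$ for all copulas — is short and is the heart of the argument.
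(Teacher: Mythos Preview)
Your step (2) is the crux, and as written it is false. The claim ``a bivariate concentration class cannot contain $C \in \mathcal{D}_p$ and $C' \in \mathcal{D}_{p'}$ with $p \neq p'$'' fails already for $\mathcal{D} = \{C^+\}$, since $C^+ \in \mathcal{D}_q$ for every $q$. The computation you quote is correct --- $\bigl(\tfrac12 C + \tfrac12 C'\bigr)(q,q) = q$ forces $C(q,q)=q=C'(q,q)$ --- but this does not yield a contradiction: the $p'$ furnished by Theorem \ref{th:r1-1} is \emph{not unique}, and there may well be a common $q$ with $C,C' \in \mathcal{D}_q$. In particular, in your ``only if'' argument you pick an arbitrary $C \in \mathcal{D}$ and an arbitrary $p$ with $C \in \mathcal{D}_p$, then try to show $\mathcal{D} \subseteq \mathcal{D}_p$ for that specific $p$; but if $C = C^+$ and you happened to pick the wrong $p$, this is simply false.

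Your first idea for the ``if'' direction --- exhibit $C^* \in \mathcal{D}_p$ with $C^*(q,q) < q$ for all $q \neq p$ --- is the right one and is exactly what the paper does: mix this $C^*$ with any $C \in \mathcal{D}' \setminus \mathcal{D}_p$ to obtain a copula in $\mathcal{D}'$ not lying in any $\mathcal{D}_q$, contradicting Theorem \ref{th:r1-1}. You should not have abandoned it. For the ``only if'' direction, the correct dichotomy is: either there exists a single $p$ with $C(p,p)=p$ for \emph{all} $C \in \mathcal{D}$ (so $\mathcal{D} \subseteq \mathcal{D}_p$ and maximality finishes), or for every $p \in (0,1)$ there is $C^p \in \mathcal{D}$ with $C^p(p,p) < p$. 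In the latter case the paper uses Lipschitz continuity of copulas, compactness of $[\epsilon,1-\epsilon]$, and \emph{countable} convexity (this is where the countable form of [Cx] is genuinely needed) to build a single $C^* \in \mathcal{D}$ with $C^*(u,u) < u$ for all $u \in (0,1)$, again contradicting Theorem \ref{th:r1-1}. Your mixing observation gives only the finite intersection property for the sets $\{q: C(q,q)=q\}$, and since these sets always contain $0$ and $1$, that alone does not produce a point in $(0,1)$.
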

    \begin{proof}
    We first show the ``if" statement.
    To show that $\mathcal D_p$ is maximal, suppose that a   bivariate concentration class $\mathcal D$ satisfies
    $\mathcal D_p\subseteq \mathcal D$  and $\mathcal D$ contains a copula $C\not \in \mathcal D_p$. This means $C(p,p) < p$ since $C(p,p)\le p$ is satisfied by any copula.
    Take another copula $C'\in \mathcal D_p$ satisfying $C'(u,u) < u$ for all $u\in (0,1)\setminus\{p\}$.
    Such a copula can be obtained by, for instance, mixing a Lebesgue measure on $[0,p]^2$ and a Lebesgue measure on $[p,1]^2$. Let $C^*=C/2+C'/2 $ which is in $\mathcal D$ since $\mathcal D$ is convex. We have $C^*(u,u)<u$ for all $u\in (0,1)$, and hence $C^*$ is not in any $\mathcal D_u$.
    By Theorem \ref{th:r1-1}, $C^*$ is not in any bivariate concentration class, a contradiction to $C^*\in\mathcal D$.
    Therefore, $\mathcal D_p$ is maximal.

   Below, we show the ``only if" statement.
   Let $\mathcal D$ be a     bivariate concentration class. If there exists $p\in (0,1)$ such that $C(p,p)=p$ for all $C\in \mathcal D$, then we have $\mathcal D\subseteq \mathcal D_p$, and the maximality of $\mathcal D$ implies $\mathcal D=\mathcal D_p$.

  Next, we suppose that there does not exist $p\in (0,1)$ such that $C(p,p)=p$ for all $C\in \mathcal D$.  We will show that this case is not possible by contradiction.
   Take any $\epsilon\in (0,1/2)$.
   For each $p\in [\epsilon, 1-\epsilon]$, our assumption implies that
   there exists a copula $C^p\in \mathcal D$ such that $C^p(p,p)<p$. Note that $C^p(u,u)$ is continuous in $u$ since all copula functions are Lipchitz continuous. As a consequence, there exists an open interval $I_p$ with $p\in I_p$
   such that $C^p(u,u)<u$ for all $u\in I_p$.
   Clearly, $\{I_p: p\in [\epsilon,1-\epsilon]\}$ is an open cover of $[\epsilon, 1-\epsilon]$.  Since $[\epsilon, 1-\epsilon]$ is compact, there exists a finite subcover, denoted by $\{I_{p_i}: i=1,\dots,n\}$, which satisfies $\bigcup_{i=1}^n I_{p_i}\supseteq [\epsilon,1-\epsilon]$.
 Write $C^{[\epsilon]}=n^{-1} \sum_{i=1}^n C^{p_i}$.
  We have $C^{[\epsilon]}\in \mathcal D$ since $\mathcal D$ is convex. Moreover, $C^{[\epsilon]}$ satisfies $C^{[\epsilon]}(u,u)<u$ for all $u\in [\epsilon,1-\epsilon]$.

Define $ C^*= \sum_{k=1}^\infty 2^{-k} C^{[3^{-k}]}$.
Convexity   of $\mathcal D$ for countable sums  implies  $ C ^* \in \mathcal D$.
On the other hand, $ C^* (u,u)<u$ for all $u\in (0,1)$ by construction. Using Theorem \ref{th:r1-1}, we know that $ C^*$ is not in a bivariate concentration class. This yields a contradiction.
 \end{proof}

 \begin{remark}
 In our formulation of [Cx],
convexity of $\mathcal D$  is required to hold for countable sums. This property is used in the last step of the proof of Proposition \ref{prop:r1-2} to yield that $C^*$ is in $\mathcal D$. The current proof techniques do not work if we  require convexity of $\mathcal D$ only for finite sums.
  \end{remark}

 To conclude the paper,
we put Proposition \ref{prop:r1-2} and Theorem \ref{th-ES} together to arrive at a complete endogenous reasoning for a regulator to use ES.

   \begin{theorem}\label{th:r1-2}
 A monetary risk measure $\rho$ on $L^\infty$ satisfies  lower semicontinuity, $\rho(0)=0$, and $\mathcal D$-aversion for a maximal    bivariate concentration class $\mathcal D$
if and only if $\rho={\rm ES}_p$ for some $p\in (0,1)$.
\end{theorem}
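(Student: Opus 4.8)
The plan is to deduce Theorem~\ref{th:r1-2} by combining the two results already in hand: Proposition~\ref{prop:r1-2}, which identifies the maximal bivariate concentration classes as precisely the sets $\mathcal D_p$, $p\in(0,1)$, and Theorem~\ref{th-ES}, which singles out $\ES_p$ among monetary risk measures by means of lower semicontinuity together with [$p$-CA]. The bridge between the two is the elementary observation, recorded just after the definition of $\mathcal D$-aversion, that for $\mathcal D=\mathcal D_p$ the property of $\mathcal D_p$-aversion coincides verbatim with [$p$-CA] (since, by Theorem~3 of \cite{WZ21}, a copula of $(Z,W)$ lies in $\mathcal D_p$ exactly when $(Z,W)$ is $p$-concentrated). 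So the whole argument is an assembly, with no separate computation needed.

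For the ``only if'' direction I would take a monetary risk measure $\rho$ on $L^\infty$ --- hence one satisfying [M] and [TI] --- that is lower semicontinuous, has $\rho(0)=0$, and satisfies $\mathcal D$-aversion for some maximal bivariate concentration class $\mathcal D$. Proposition~\ref{prop:r1-2} gives $\mathcal D=\mathcal D_p$ for some $p\in(0,1)$, so $\mathcal D$-aversion is exactly [$p$-CA]. Then $\rho$ fulfils [M], [TI], [P], [$p$-CA] and $\rho(0)=0$ simultaneously, and Theorem~\ref{th-ES} forces $\rho=\ES_p$.

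For the ``if'' direction I would fix $p\in(0,1)$ and verify that $\rho=\ES_p$ meets every requirement: it is a monetary risk measure, it is lower semicontinuous, and $\ES_p(0)=0$; moreover it satisfies [$p$-CA] by Theorem~5 of \cite{WZ21} (equivalently, by the sufficiency half of Theorem~\ref{th-ES}), which is the same as $\mathcal D_p$-aversion, and $\mathcal D_p$ is a maximal bivariate concentration class by Proposition~\ref{prop:r1-2}; hence $\rho$ satisfies $\mathcal D$-aversion for the maximal class $\mathcal D=\mathcal D_p$, as required. The only point demanding a little care is the existential quantifier on $\mathcal D$ in the statement: in the ``if'' direction one must actually exhibit a maximal bivariate concentration class witnessing $\mathcal D$-aversion (namely $\mathcal D_p$), whereas in the ``only if'' direction the hypothesis supplies such a $\mathcal D$ and Proposition~\ref{prop:r1-2} pins down which one it is. Beyond keeping this quantifier bookkeeping straight there is no genuine obstacle, since all of the analytic content already sits inside the two cited results.
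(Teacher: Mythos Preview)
Your proposal is correct and mirrors the paper's own proof essentially verbatim: the paper simply combines Proposition~\ref{prop:r1-2} with Theorem~\ref{th-ES}, using the identification of $\mathcal D_p$-aversion with [$p$-CA], and your write-up does precisely this with the quantifier bookkeeping spelled out.
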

 \begin{proof}
 This result follows from combining
  Theorem \ref{th-ES} and Proposition \ref{prop:r1-2},
  and noting that $\mathcal D_p$-aversion is equivalent to [$p$-CA].
   \end{proof}
 To interpret Theorem \ref{th:r1-2}, we make the following economic assumptions on the regulator's preference towards dependence structures in risk aggregation. First, the regulator believes that breaking LLN is dangerous ([ND]); second, the regulator believes that more positive dependence is more dangerous ([DM]); third, the regulator believes that a mixture of dangerous structures is dangerous ([Cx]); fourth, the regulator chooses to use a largest possible set to model such dangerous structures (maximality).
If all four   assumptions are met, then,  by Proposition \ref{prop:r1-2}, the regulator needs to use a risk measure that satisfies concentration aversion.
With some other standard properties in  Theorem \ref{th-ES},   we further arrive at the class of ES.
Certainly, the desirability of the four assumptions on the regulator's dependence preference can be debated, and, based on the main results of this paper, such debates can directly translate to critical arguments for or against the use of ES in financial regulation.

\begin{remark}
If maximality is removed from the consideration of the regulator, then we can allow for other bivariate concentration classes. The simplest  such example  is the singleton $\mathcal D^+:=\{C^+\}$, which is clearly also the smallest bivariate concentration class.
As shown by \cite{MW20}, a monetary risk measure is $\mathcal D^+$-averse if and only if it is SSD-consistent. 
Therefore, by Theorem \ref{th:r1-2}, the maximality of the bivariate concentration class $\mathcal D$ pins down the class of ES among all    lower-semicontinuous consistent risk measures.
This shows that, among a general class of risk measures,  ES has the largest spectrum of dangerous dependence. 
In other words, if maximality of $\mathcal D$ is desirable, then ES is the only suitable class; if maximality is relaxed to somewhere between the largest and the smallest, then the regulator has more choices of SSD-consistent risk measures.
A larger set of dangerous dependence narrows down the corresponding choices of regulatory risk measures, from all SSD-consistent ones to ES.
\end{remark}

\subsubsection*{Acknowledgements}
The authors thank the Editor, an Associate Editor, two anonymous referees,   Nazem Khan and Yi Shen for helpful comments on an early version of the paper.
The authors would like to thank Martin Herdegen for raising the question of whether the NRC axiom  of \cite{WZ21} can be replaced by an alternative natural property without imposing an equality.
Ruodu Wang acknowledges financial support from  the
Natural Sciences and Engineering Research Council of Canada (RGPIN-2018-03823, RGPAS-2018-522590).

\appendix

\section{Proof of Lemma \ref{lem:r1-1}}
\label{app:A}

\begin{proof}[Proof of Lemma \ref{lem:r1-1}]
Let $C'= C/2+ C^+/2$.
Note that $C^+\ge C$, and hence $C'\ge C$.
Moreover,  since $t_{C^+}=s_{C^+}$ is the identity on $(0,1)$, we have  $t_{C'}=\min (t_C,t_{C^+}) = t_C$
and $s_{C'}=\max (s_C,s_{C^+}) =s_C$.
See Figure \ref{fig:r1-1}a-b for an illustration of $C$ and $C'$.
It suffices to show that there exists $\widetilde C\ge C'$ such that $\widetilde C$ has positive density on $B_C$.

Below, for simplicity, we will use the notation $C$ for $C'$ above. (Alternatively, we can directly assume that the measure $C^+$ is absolutely continuous with respect to the measure $C$, and the above argument guarantees that this assumption is without loss of generality.)

Take $(U,V)\sim C$ and $(U',V')\sim C$ such that $(U,V)$ and $(U',V')$ are independent.  Construct  a random variable $ \widetilde V  $ by
\begin{align}
\label{eq:loop}
 \widetilde V =  V  \id_{A^c} + V'\id_A,~\mbox{where $A=\{U>U',~V<V'\}\cup\{U<U',~V>V'\}$.}
\end{align}
 Note that since $(U,V)$ and $(U',V')$ are iid,  we have,
 for $v\in (0,1)$,
\begin{align*} \p(V'\le v,~ A) &=  \p(V'\le v,~ U>U',~V<V' )  + \p(V'\le v, ~U<U',~V>V' )\\   &= \p(V\le v, ~U'>U,~V'<V )  + \p(V\le v, ~U'<U,~V'>V )
= \p(V\le v, ~A) .\end{align*}
Hence,
\begin{align*}
\p(\widetilde V\le v ) = \p(V\le v, A^c) + \p(V'\le v, A) = \p(V\le v, A^c) + \p(V\le v, A)  =\p(V\le v)= v
 \end{align*} implying that  $\widetilde V$ is uniformly distributed on $[0,1]$.
 As a consequence, the distribution of $(U,\widetilde V)$ is a copula, and we denote it by $\widehat C$.

 We first verify $\widehat C\ge C$.
 Using the fact that $(U,V)$ and $(U',V')$ are iid, we get, for    $(u,v)\in [0,1]^2$,
 \begin{align*}
&\p((U,V')\le (u,v) , ~U<U',~V>V') -  \p((U,V)\le  (u,v), ~U<U',~V>V')
 \\&=  \p( U\le u,~V'\le v< V , ~U<U')
  \\&=   \p( U'\le u,~V\le v< V' , ~U'<U)
    \\&\ge   \p(U\le u,~V\le v <V',~U>U')
 \\& =  \p((U,V)\le (u,v) , ~U>U',~V<V') -  \p((U,V')\le  (u,v), ~U>U',~V<V').
 \end{align*}
As a consequence,
$$
\p( (U,V')\le (u,v) , ~A) \ge \p( (U,V)\le (u,v) , ~A),
$$
and hence $$\p((U,\widetilde V)\le (u,v) ) =
\p( (U,V)\le (u,v) , ~A^c )+
\p( (U,V')\le (u,v) , ~A)
\ge \p( (U,V)\le (u,v) ),
 $$
 which gives the order
$\widehat C(u,v)\ge C(u,v)$. Intuitively, this is because $\widehat C$ is obtained from $C$ via a continuum of CI transfers (see Figure \ref{fig:r1-1}b). 

Finally, we verify the statement on the positive density on $B_C$. For $(s,t)\in [0,1]^2$,
\begin{align*}
\widehat C(s,t) &=\E\left[ \p\left ((U,\widetilde V)\le (s,t) \mid  U,V \right )\right]\\ & =\int_{[0,1]^2} \p\left ((U,\widetilde V)\le (s,t)  \mid (U,V)=(u,v)\right) \d C(u,v)\\ & =\int_{[0,s]\times [0,1]} \p\left (  \widetilde V \le t  \mid (U,V)=(u,v)\right) \d C(u,v)
\\& =
\int_{[0,s]\times [0,1]}\left (    \p\left (  V' \le t ,~ U'> u,~V'< v \right) +   \p\left (  V' \le t  ,~ U'< u,~V'> v \right)\right) \d C(u,v)
\\ &\quad \quad + \int_{[0,s]\times [0,t]}\left (    \p\left (  U'\le  u,~V'\le  v \right) +   \p\left (    U' \ge  u,~V'\ge  v \right)\right) \d C(u,v).
\end{align*}
Write $t\wedge v= \min(t,v)$.
We have  $\widehat C= F + G $ where
$$
F(s,t)=
 \int_{[0,s]\times [0,1]}\left (    \p\left (  V' \le t\wedge v ,~ U'> u \right) +   \p\left (  v<V' \le t  ,~ U'< u   \right)\right) \d C(u,v),$$ and
$$ G(s,t)= \int_{[0,s]\times [0,t]}\left (    \p\left (  U'\le  u,~V'\le  v \right) +   \p\left (    U' \ge  u,~V'\ge  v \right)\right) \d C(u,v).
$$
Note that
  $F$ and $G$ are the distribution functions of two Borel measures on $[0,1]^2$.
  Below we will show that $F$ has  a positive density on  a subset of $B_C$ (this is shown in Figure \ref{fig:r1-1}c), and then we use another construction to obtain positive density on $B_C$.

Since $V'$ is uniform on $[0,1]$, we know   $\p(V'\le t\wedge v)=t\wedge v$, and hence
\begin{align*}
F(s,t) & =
 \int_{[0,s]\times [0,1]}\left (   t\wedge v - \p\left (  V' \le t\wedge v ,~ U'<u \right) +   \p\left (  v<V' \le t  ,~ U'< u   \right)\right) \d C(u,v)
\\&  =
 \int_{[0,s]\times [0,1]}\left (   t\wedge v   +   C(u,t) - 2 C(u, t \wedge v ) \right) \d C(u,v).
\end{align*}
Using (see Section 2.12 of \cite{J14})
\begin{equation}\label{eq:conditionalcopula}   C _{2|1}(v|u)= \frac{\partial    C}{\partial u}(u,v) \mbox{~~~and~~~}   C _{1|2}(u|v)= \frac{\partial    C}{\partial v}(u,v) ~~~~\mbox{almost everywhere},\end{equation}
we get
\begin{align*}
\frac{\partial  F}{\partial s}(s,t)  &  =    \int_{  [0,1]}\left (   t\wedge v   +   C(s,t) - 2 C(s, t \wedge v) \right)   C_{2|1} ( \d v | s) .  
\end{align*}
Exchanging the order of the derivative and the integral (guaranteed by the dominated convergence theorem), we get
\begin{align*}
\frac{\partial^2 F}{\partial s \partial t }(s,t)   & =   \int_{ [0,1]} \left ( \left (   1 - C_{1|2} (s|t)  \right) \id_{\{t<v\}}+  C_{1|2} (s|t)   \id_{\{t>v\}} \right)  C_{2|1} ( \d v | s)
\\&   =     \left (   1 - C_{1|2} (s|t)  \right) \left (   1-C_{2|1} ( t | s)\right) +  C_{1|2} (s|t)  C_{2|1} ( t  | s) .
\end{align*}
Therefore,
we have
$
\frac{\partial^2 F}{\partial s \partial t }(s,t)  >0
$
as soon as $(C_{1|2} (s|t), C_{2|1} (t | s))$ is not $(0,1)$ or $(1,0)$.
Equivalently,
$
\frac{\partial^2 F}{\partial s \partial t }(s,t)  >0
$
if the support of $C$ includes
either  $(s',t)$ to the left of $(s,t)$
and  $(s,t')$ below $(s,t)$,
or   $(s',t)$ to the right of $(s,t)$
and  $(s,t')$ above $(s,t)$.
Since the support of $C$ includes the diagonal line,
 $F$ has positive density on the set
$$
B^*_{C}=\{(s,t): t^*_C(s) \le t \le s^*_C(s)\},
$$
where $t^*_C(s)$ is the essential infimum of $C_{2|1}(\cdot|s)$ and
$s^*_C(s)$ is the essential supremum of $C_{2|1}(\cdot|s)$.
Since $\widehat C=F+G$,
we obtain that $\widehat C$ has positive density on $B^*_{C}$, possibly plus a non-absolutely continuous component coming from $G$.

Generally, the set $B^*_{C}$ may  be different from the set $B_C$. To obtain a positive density on $B_C$, we apply the above procedure again with $\widehat C$ in place of $C$, starting from \eqref{eq:loop}.
This time, we arrive at a new copula $\widetilde C \ge \widehat C$ such that $\widetilde C$ has positive density on $B^*_{\widehat C}$ (this is shown in Figure \ref{fig:r1-1}d).
We claim $B_C\subseteq B^*_{\widehat C}$.
To show this, take $(s,t)\in B_C$.
Assume $t\ge s$, and the case $s<t$ is symmetric.
By definition of $B_C$,
there exists $(s',t')$ in the support of $C$ such that $s' \le s$ and $t' \le t$.
This implies that $(s',t) \in B^*_{C}$.
Since $\widehat C$ has positive density on $B^*_{C}$,
we know that $(s',t)$ and $(s,s)$ are both in the support of $\widehat C$.
This further implies   $(s,t)\in B^*_{\widehat C}$,
and hence $ B_C\subseteq B^*_{\widehat C}$.
Therefore, we conclude that $\widetilde C$ has positive density on $B_C$.
\end{proof}

\end{document}